\theoremstyle{plain}
\newtheorem{lmm}{Lemma}  % Independent counter for Lemmas
\newtheorem{prop}{Proposition}  % Independent counter for Propositions
\begin{document}
%%%%%%%%%%%%%%%%
\hypersetup{breaklinks=true}
% Outcomment only when entries are known. Otherwise leave as is and
%   default values will be used.
%\setcounter{page}{1}
%\VOLUME{00}%
%\NO{0}%
%\MONTH{Xxxxx}% (month or a similar seasonal id)
%\YEAR{0000}% e.g., 2005
%\FIRSTPAGE{000}%
%\LASTPAGE{000}%
%\SHORTYEAR{00}% shortened year (two-digit)
%\ISSUE{0000} %
%\LONGFIRSTPAGE{0001} %
%\DOI{10.1287/xxxx.0000.0000}%

% Author's names for the running heads
% Sample depending on the number of authors;
% \RUNAUTHOR{Jones}
% \RUNAUTHOR{Jones and Wilson}
% \RUNAUTHOR{Jones, Miller, and Wilson}
% \RUNAUTHOR{Jones et al.} % for four or more authors
% Enter authors following the given pattern:
%\RUNAUTHOR{}
\RUNAUTHOR{Encz, Mastrolilli, and Vercesi}

% Title or shortened title suitable for running heads. Sample:
% \RUNTITLE{Predictive Maintenance in Manufacturing}
% Enter the (shortened) title:
\RUNTITLE{B$\&$B-algorithms as PTAS-s}

% Full title. Sample:
% \TITLE{Optimal Resource Allocation in Humanitarian Logistics: A Stochastic Programming Approach}
% Enter the full title:
\TITLE{Branch-and-Bound Algorithms as Polynomial-time Approximation Schemes}

% Block of authors and their affiliations starts here:
% NOTE: Authors with same affiliation, if the order of authors allows,
%   should be entered in ONE field, separated by a comma.
%   \EMAIL field can be repeated if more than one author
\ARTICLEAUTHORS{%
%\AUTHOR{John Doe,\textsuperscript{a} Jane Smith,\textsuperscript{b}}
%\AFF{\textsuperscript{a}Department of Industrial Engineering, University of XYZ, \EMAIL{john.doe@xyz.edu; \textsuperscript{b}Department of Computer Science, University of ABC, \EMAIL{jane.smith@abc.edu}} 
\AUTHOR{Koppány István Encz}
\AFF{Faculty of Informatics, Università della Svizzera italiana, [CH 6962 Lugano, Switzerland]; Istituto Dalle Molle di studi sull'intelligenza artificiale (IDSIA USI-SUPSI), [CH 6962 Lugano, Switzerland], \EMAIL{enczk@usi.ch}}

\AUTHOR{Monaldo Mastrolilli}
\AFF{University of Applied Sciences and Arts of Southern Switzerland (SUPSI-IDSIA) [CH 6962 Lugano, Switzerland], \EMAIL{monaldo.mastrolilli@supsi.ch}}

\AUTHOR{Eleonora Vercesi}
\AFF{Faculty of Informatics, Università della Svizzera italiana, [CH 6962 Lugano, Switzerland]; Istituto Dalle Molle di studi sull'intelligenza artificiale (IDSIA USI-SUPSI), [CH 6962 Lugano, Switzerland], \EMAIL{eleonora.vercesi@usi.ch}}
% Enter all authors
} % end of the block

\ABSTRACT{%
% Enter your abstract
Branch-and-bound algorithms (B\&B) and polynomial-time approximation schemes (PTAS) are two seemingly distant areas of combinatorial optimization. We intend to (partially) bridge the gap between them while expanding the boundary of theoretical knowledge on the B\&B framework. Branch-and-bound algorithms typically guarantee that an optimal solution is eventually found. However, we show that the standard implementation of branch-and-bound for certain knapsack and scheduling problems also exhibits PTAS-like behavior, yielding increasingly better solutions within polynomial time. Our findings are supported by computational experiments and comparisons with benchmark methods. This paper is an extended version of a paper accepted at ICALP 2025 \cite{encz_et_al:LIPIcs.ICALP.2025.73}.

\MSCCLASS{90C57, 68W25}
\ORMSCLASS{Production/scheduling – Approximations/heuristic; Production/scheduling – Multiple machine; Programming – Algorithms – Branch-and-bound}
}%

\FUNDING{The authors were supported by the Swiss National Science Foundation project n. $200021\_212929$ / 1 "Computational methods for integrality gaps analysis".}

%Supplemental Material:
%Data Ethics & Reproducibility Note:

% Sample
%\KEYWORDS{Stochastic programming, Decision support,Uncertainty, Disaster response, Optimization}

% Fill in data. If unknown, outcomment the field
\KEYWORDS{Branch-and-bound algorithm, Polynomial-time approximation scheme, Parallel machine scheduling problem, Knapsack problem}

%\HISTORY{Received: Month DD, YYYY; Accepted: Month DD, YYYY; Published Online: Month DD, YYYY}

\maketitle
%%%%%%%%%%%%%%%%%%%%%%%%%%%%%%%%%%%%%%%%%%%%%%%%%%%%%%%%%%%%%%%%%%%%%%

% Text of your paper here

\section{Introduction}\label{sec:intro}

Branch-and-bound algorithms (B\&B) have been a central part of combinatorial optimization for quite some time. They serve as a go-to method for several optimization problems both in theory and in practice, and many $\textsf{NP}$-hard optimization problems are still frequently attacked by a variant of the branch-and-bound method or solvers having it at their core. Their great success could partially be attributed to the generality of the framework, which allows it to be applied to fundamentally different problems; along with the flexibility of choosing parameters such as the branching rule or the search strategy, resulting in an extensive list of options for exploiting the same underlying principles of the framework. For an introduction of the idea and a thorough description, we refer to Kohler and Steiglitz \cite{kohler_steiglitz}; whereas a survey of recent advancements regarding best practices of parameter tuning can be found in \cite{bnb_survey}.

Broadly speaking, the B\&B framework consists of iteratively refining a partition of the search space of a given combinatorial optimization problem. The process is commonly depicted with the help of an auxiliary tree, where the leaves of the tree (often called \emph{active nodes}) correspond to the currently considered partition classes. In this context, refining the partition by further dividing one class is equivalent with creating child nodes for the corresponding leaf in the tree. In addition, each node in the tree has an attributed lower or upper bound (depending on the type of problem in consideration) on the objective value of the best solution in that particular partition class; typically they are calculated from a linear relaxation of the integer formulation of the problem. Occasionally, a node can be discarded from the search process if its attributed bound is worse than an already found solution. 

It is evident from the above description that the framework has various degrees of freedom. The key components that can be adjusted independently are the following:
\begin{itemize}
    \item \textbf{Branching}: The process of dividing a problem into sub-problems.
    \item \textbf{Bounding}: Calculating lower/upper bounds to limit the search space.
    \item \textbf{Selection/Search}: Choosing the next sub-problem (active node) to explore based on some kind of ranking.
\end{itemize}

The choice of these parameters could of course be influenced by the underlying problem. However, when it comes to traversing the branching tree, certain heuristics might be fixed a priori that do not depend on the nature of the problem. Common search heuristics include breadth-first search (BFS), depth-first search (DFS), or best-first search; the latter of which chooses the active subproblem with the best attributed lower/upper bound. 

Several attempts have been made to analyze the effect of preferring one selection strategy to another. For instance, Greenberg and Hegerich \cite{greenberg_hegerich} compare the DFS and the best-first methods applied to the knapsack problem. In addition, Dechter and Pearl \cite{dechter} prove that under some conditions, a variant of the best-first method based on a framework called $A^*$ \cite{hart} visits the least number of nodes before finding the optimum. Moreover, with the increasing interest in applications of artificial intelligence, researchers have even deployed machine learning-based methods that try to \emph{learn} an optimal selection strategy (See, e.g., \cite{alvarez}, \cite{balcan}, \cite{gupta}, \cite{khalil}, \cite{gp_learning}. See \cite{bengio}, \cite{lodi}, \cite{scavuzzo24} for recent surveys). 

However, these comparisons either rely on empirical evidence and rank strategies according to their practical performance, or argue intuitively about the dominance of one method from a certain aspect: DFS is often considered the memory-efficient alternative, whereas the best-first search is often regarded as the ``intuitive'' best choice. To the best of our knowledge, very little effort has been made to justify why one strategy outperforms another, given the problem type.

Meaningful applications of the B\&B framework mostly revolve around $\textsf{NP}$-hard optimization problems, where a standard worst-case analysis does not illuminate the true power of algorithms. Instead, recent results focus on the average-case analysis, and study the performance on randomly sampled instances. Pataki, Tural, and Wong \cite{pataki_basis_2010} analyze the complexity of B\&B for the integer feasibility problem. They show that if the magnitude of the coefficients in the constraint matrix is sufficiently large, then, up to a reformulation technique, almost all instances can be solved directly at the root node. Recently, \cite{dey_branch-and-bound_2023} show that, with any branching rule and best-first as node selection, B\&B reaches the optimum with a polynomial number of nodes on randomly sampled instances, for a fixed number of constraints. Based on this work, Borst et al. \cite{Dadush23, Dadush23-2} show that for some IPs, B\&B trees have size $n^{poly(m)}$ with high probability (i.e., polynomial for fixed $m$), which significantly extends the class of IPs for which B\&B is known to be polynomial.
Some ``negative'' results have also been proposed. Dash \cite{dash_exponential_2005} showed an exponential lower bound on Branch and Cut for 0-1 integer programming when just a few families of cuts are enforced. Dey et al. \cite{Dey22} construct packing, covering and Travelling Salesperson Problem instances for which any complete general B\&B tree must have an exponential size. Dadush et al. \cite{Dadush20} show that any general branch-and-bound tree that proves the integer infeasibility of the cross-polytope has at least $\frac{2^n}{n}$ leaf nodes. Bell and Frieze \cite{bell_solving_2023} show that any B\&B method for the Asymmetric Traveling Salesman Problem via the assignment problem relaxation has an exponential number of nodes.
More recently, \cite{full_strong_branching} showed that for the Vertex Cover problem, choosing full strong branching as the variable selection rule can either perform exceptionally well or be exponentially worse than any other rule, depending on the class of instances considered.

The efficiency of B\&B appears to be strongly dependent on the problem at hand, as well as the choice of lower bound, branching rule, and node selection strategy. This makes it particularly interesting to investigate for which problems, under what conditions, and with which specific choices B\&B can be made to run in polynomial time.

\bigskip

In our current work, we endeavor to explain the intuitive advantage of the best-first search strategy by giving a worst-case theoretical analysis from a slightly unusual point of view. Namely, we will show that for the makespan minimization of unrelated parallel machine scheduling problem with a fixed number of machines (denoted by $Rm||C_{max}$ following standard three-field representation; see \cite{machine_scheduling_review}) and the multiple knapsack problem, the best-first strategy paired with other natural linear programming-based branching and bounding strategies yields a polynomial-time approximation scheme. Thus, practical observations regarding its superiority are strengthened by the guarantee that it is able to find fast a solution \emph{arbitrarily close to the optimum}.

Our contributions are as follows: first (Section \ref{sec:knap}), we consider a family of branch-and-bound algorithms with the best-first tree traversal rule for the multiple knapsack problem, and show that they form a polynomial-time approximation scheme, in which their execution time is constrained to be within polynomial bounds for any fixed approximation ratio. The family $A^{\text{knap}}_{\alpha}$ (parametrized by the approximation parameter $\alpha$) relies on a linear programming-based upper bound and a branching rule that exploits the specific structure of the linear program (see Proposition \ref{2_approx_knapsack}).

\begin{restatable*}{thm}{mk}\label{multi_knapsack_ptas}
    For every fixed $0< \alpha < 1$, the algorithm $A^{\text{knap}}_{\alpha}$ returns an $\alpha$-approximate solution to the multiple knapsack problem, after processing $O(n^{c_{\alpha,m}+1} \cdot m^{c_{\alpha,m}})$-many nodes in the branching tree for some constant $c_{\alpha, m}$ that depends on $\alpha$ and $m$.
\end{restatable*}

Based on the same underlying idea, we provide (Section \ref{sec:js_ptas}) similar results for the makespan minimization of unrelated parallel machine scheduling problem \emph{with a fixed number of machines}, in which we prove that a certain B\&B algorithm is an efficient polynomial-time approximation scheme (EPTAS; for an arbitrary $\varepsilon$ and inputs of size $n$, its running time is $O(n^c \cdot f(1/\varepsilon))$ for some constant $c$ that is independent from $\varepsilon$) for $Rm||C_{max}$. The family $A^{\text{unrel}}_{\varepsilon}$ again relies on a linear programming relaxation and its approximation property. The following theorem  easily implies a polynomial running time for any fixed error.

\begin{restatable*}{thm}{unrel}\label{unrel_machine_ptas}
    For every fixed $\varepsilon > 0$, the algorithm $A^{\text{unrel}}_{\varepsilon}$ returns a $(1+\varepsilon)$-approximate solution to the unrelated machine scheduling problem, after processing at most $m^{\lfloor\frac{m^2}{\varepsilon}\rfloor}$-many nodes in the branching tree.
\end{restatable*}

Exploration and exploitation are two fundamental concepts in search and optimization algorithms. Exploration searches diverse areas, while exploitation refines known good solutions for efficiency. A balance is crucial in algorithms like branch-and-bound to prevent slow convergence. In many branching points of the algorithm, there are decision ambiguities, meaning that (almost) indistinguishable sub-problems keep reappearing in the process. In Section \ref{sec:js:fptas}, we demonstrate that if two nodes represent ``similar'' situations, it is not necessary to explore all such ``similar'' cases to obtain an approximate solution. Given a polynomial bound on the running time, we can put on hold the exploration of some nodes while prioritizing others that are not similar to already explored sub-problems, resulting in an improved exploration of the search space. Standard rounding techniques can be used to identify ``similar nodes''. For the special case of the \emph{uniform machine scheduling} problem with fixed number $m$ of machines (denoted by $Qm||C_{max}$, see \cite{machine_scheduling_review}), this enhances the diversification of the best-first search method, resulting in better exploration of the search space and the achievement of a fully polynomial-time approximation scheme (FPTAS). We will denote the enhanced algorithm by $A^{\text{sim-prof}}_{\varepsilon}$ and prove the following theorem:

\begin{restatable*}{thm}{fptas}\label{fptas}
    For every fixed $\varepsilon > 0$, the algorithm $A^{\text{sim-prof}}_{\varepsilon}$ returns a $(1+\varepsilon)^2$ - approximate solution to the uniform machine scheduling problem, after processing at most $n\cdot \left(\frac{3n(1+\varepsilon)^2}{\varepsilon}\right)^m$ nodes in the branching tree.
\end{restatable*}

Slightly modifying the previous algorithm, we will arrive at a framework that provides a PTAS for the identical machine scheduling problem even when the number of machines $m$ is part of the input (the problem is frequently denoted by $P||C_{max}$).

\begin{restatable*}{thm}{ptasm}\label{ptasm}
    For every fixed $\varepsilon > 0$, the algorithm $A^{\text{eq-prof}}_{\varepsilon}$ returns a $(1+\varepsilon)^2$ - approximate solution to the identical machine scheduling problem with a non-fixed number of machines, after processing at most $\frac{2m^{f(\varepsilon)+1}}{\varepsilon}$ nodes in the branching tree for some function $f$.
\end{restatable*}

To prove these results, we provide structural properties of the vertices of the corresponding polyhedra in Lemma \ref{vertex_char}. We continue with computational experiments supporting our theoretical results in Section \ref{sec:comp_exp}. Finally, with the hope of fueling future research, we list (Section \ref{sec:conc}) the fundamental properties of optimization problems that made our approach applicable.

\emph{Disclaimer:} This paper is an extended version of a paper the authors presented at ICALP 2025 \cite{encz_et_al:LIPIcs.ICALP.2025.73}.

\section{A B\&B PTAS for the Multi-Knapsack Problem}\label{sec:knap}

In the one-dimensional $0-1$ multiple knapsack problem (referred to as \emph{multiple knapsack problem} or \emph{multi-knapsack problem}), we are given $n$ items characterized by weights $\bm{w} = (w_1, \ldots, w_n)$ and  profits $\bm{p} = (p_1, \ldots, p_n)$. Additionally, we have \( m \) knapsacks with capacities \( \bm{C} = (C_1, \ldots, C_m) \), where \( m \) is a fixed constant. The goal is to select a subset of items to maximize the total profit while ensuring that the weight constraints of each knapsack are satisfied. We assume that all weights, profits, and knapsack capacities are non-negative integers.

When $m$ is fixed, which is the case we are considering in this paper, the problem is weakly $\textsf{NP}$-hard and admits an FPTAS \cite{ibarra_kim,lawler_fptas}.
For reference on a comprehensive theory of the problem, we mention the Martello-Toth book \cite{martello_toth}. A survey on recent improvements can be found in \cite{survey,MastrolilliH06}. %The problem admits a PTAS (\cite{sahni_knapsack}); furthermore, Ibarra and Kim provided a dynamic programming-based FPTAS in \cite{ibarra_kim}. Since then, several improvements have been made in the running time of the best-known algorithms, see for instance \cite{lawler_fptas}.
%\mon{maybe we should emphasize if we are going to consider the case where m is a fixed constant or not at this point.}

The use of branch-and-bound algorithms for the knapsack problem is well-established. Notable early contributions include those by Kolesar \cite{kolesar}, Greenberg and Hegerich \cite{greenberg_hegerich}, and Horowitz and Sahni \cite{horowitz_sahni_knapsack} for the single-knapsack case. These approaches often leverage various heuristics, such as DFS or BFS search strategies and fractional-item pivot selection rules. However, most of these results focus primarily on computational experiments, with little emphasis on formal theoretical guarantees.

In this work, we demonstrate that a ``standard'' branch-and-bound implementation for the multi-knapsack problem naturally yields a PTAS. Specifically, at each node, the \textbf{bounding} step involves solving the linear programming relaxation known as \emph{surrogate relaxation} (see \cite{martello_toth}, Chapter 6.2.1). 
We then apply the standard \textbf{rounding} technique of Dantzig \cite{dantzig} to obtain an $(m+1)$-approximate feasible solution, and \textbf{branch} according to the most profitable fractional item. The \textbf{selection} strategy is the best-first rule, in which we choose the node to be processed next whose upper bound (the fractional optimum) is the highest; we will denote this strategy by HUB (acronym for Highest Upper Bound) in the experimental section. We terminate whenever the ratio between the global upper bound and the best integer solution reaches or goes above a fixed constant $\alpha\in (0,1)$. The full details of our algorithm, \( A^{\text{knap}}_{\alpha} \), are provided in Section \ref{sec:knap_proof}, along with a proof of the following result:

\mk

For completeness, we sketch a pseudocode of $A^{\text{knap}}_{\alpha}$ in Appendix \ref{app:pseudocode}.

\subsection{Proof of Theorem \ref{multi_knapsack_ptas}}\label{sec:knap_proof}

In this section, we demonstrate that a ``standard'' branch-and-bound implementation for the multi-knapsack problem naturally yields a PTAS. We will need the standard integer programming formulation of the problem:
\begin{equation}\label{knapsack_ip}
    MK_m(\bm{C}, \bm{w}, \bm{p}) : \max \sum\limits_{j=1}^n \sum\limits_{i = 1}^m p_j \cdot x_{j,i} \,\,\, \text{ s.t.}
\end{equation}

\begin{subnumcases}{}
    \sum\limits_{j=1}^n w_j \cdot x_{j,i} \le C_i, & $i \in [m]$, \label{cap_const}\\
    \sum\limits_{i=1}^m x_{j,i} \le 1, & $j \in [n]$, \label{ass_const}\\
    x_{j,i} \in \mathbb{N}, & $j \in [n],\, i \in [m]$, \label{nonneg_const}
  \end{subnumcases}

For the \textbf{Branching} and \textbf{Bounding} components, we will rely on a relaxation of \eqref{knapsack_ip}--\eqref{nonneg_const}. Several such relaxations are discussed in \cite{martello_toth}; the ones that are relevant to us are the \emph{linear programming} relaxation (\eqref{knapsack_ip}, \eqref{cap_const}, \eqref{ass_const}, and non-negativity constraints instead of \eqref{nonneg_const}) and the \emph{surrogate relaxation}, which in essence merges the $m$ knapsacks into one single knapsack with capacity $\sum\limits_{i=1}^m C_i$:

\begin{align}\label{knapsack_surrogate}
    S-MK_m(\bm{C}, \bm{w}, \bm{p})= MK_1 \left(\sum\limits_{i = 1}^m C_i, \bm{w}, \bm{p}\right)
\end{align}

%ELE da qui
Martello and Toth show in \cite{martello_toth} (Chapter 6.2.1, Theorem 6.2) that the optimum of the linear relaxation of the surrogate relaxation coincides with the optimum of the linear relaxation of the original problem. Using this relationship and the well-known observation of George Dantzig \cite{dantzig}, they describe an algorithm that returns an $(m+1)$-approximate solution. They sort the $n$ items decreasingly by their unit profit $\frac{p_i}{w_i}$, and greedily fill each knapsack in this order until an item no longer fits inside entirely. Then they cut the excessive part and assign it to the next knapsack, and resume the process on this new knapsack with the next item in the queue. Let $\bm{x}^*$ denote this solution to the linear relaxation of \eqref{knapsack_ip}--\eqref{nonneg_const}; by the results of Martello and Toth and Dantzig, $\bm{x}^*$ is optimal. Let $s_1, \ldots, s_m$ denote the (at most) $m$ items that are fractionally assigned in the process. Item $s_k$ is referred to as the \emph{critical item relative to knapsack $k$}, and is obtained as $s_k = \min \left\{j: \sum \limits_{l=1}^j w_l > \sum\limits_{i=1}^k C_i\right\}$.

They conclude that the best of the (at most) $m+1$ integer solutions $\chi_{s_1}, \ldots, \chi_{s_m}, \lfloor \bm{x}^*\rfloor$ has a profit of at least $\frac{1}{m+1}$ times the fractional optimum. Let $\bm{x'}$ denote the most profitable of these $m+1$ assignments. With these notations, we have that
\begin{prop}[Martello, Toth; \cite{martello_toth}]\label{2_approx_knapsack}
    \[
    \bm{p}\cdot \bm{x}' = \max\left\{p_{s_1}, \ldots, p_{s_m}, \sum\limits_{k=1}^{m}\sum\limits_{j = s_{k-1}+1}^{s_k -1} p_j\right\} \ge \frac{1}{m+1}(\bm{p} \cdot \bm{x}^*).
    \]
\end{prop}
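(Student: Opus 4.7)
The plan is to exploit the structural characterization of the fractional greedy solution $\bm{x}^*$ and then invoke a simple averaging argument over the $m+1$ candidate integer solutions implicit in the statement.

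First, I would decompose the fractional optimum. By construction, the greedy procedure assigns items integrally in the intervals $(s_{k-1}, s_k)$ to knapsack $k$ (with the convention $s_0 = 0$), and only the critical items $s_1, \ldots, s_m$ may appear fractionally (split between two consecutive knapsacks). Consequently, the fractional profit decomposes as
\begin{equation*}
\bm{p} \cdot \bm{x}^* \;=\; \sum_{k=1}^{m}\sum_{j=s_{k-1}+1}^{s_k-1} p_j \;+\; \sum_{k=1}^{m} \lambda_k \, p_{s_k},
\end{equation*}
where $\lambda_k \in [0,1]$ is the total fraction of the critical item $s_k$ assigned across the two knapsacks it straddles. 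Since $\lambda_k \le 1$, I can upper-bound this by
\begin{equation*}
\bm{p} \cdot \bm{x}^* \;\le\; \sum_{k=1}^{m}\sum_{j=s_{k-1}+1}^{s_k-1} p_j \;+\; \sum_{k=1}^{m} p_{s_k}.
\end{equation*}

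The right-hand side is a sum of exactly $m+1$ quantities: the single bundled term collecting all integrally packed items, together with the $m$ individual critical item profits. A standard pigeonhole/averaging observation then gives that the maximum of these $m+1$ quantities is at least $1/(m+1)$ times their sum, which by the inequality above is at least $\bm{p}\cdot \bm{x}^*/(m+1)$. This is exactly the desired bound.

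Finally, I would verify that each of these $m+1$ candidates corresponds to a genuinely \emph{feasible} integer assignment $\bm{x}'$. The bundle of integrally placed items $\{s_{k-1}+1,\ldots,s_k-1\}_{k=1}^m$ is feasible by the construction of the greedy procedure, since each such block is placed entirely in knapsack $k$ and respects \eqref{cap_const}. For each single critical item $s_k$, feasibility amounts to the assumption that $w_{s_k} \le C_j$ for some knapsack $j$ (i.e., every item fits in at least one knapsack, which can be assumed without loss of generality after discarding trivially useless items). The main subtlety, and arguably the only nontrivial point, is keeping the bookkeeping straight between items that are split between two knapsacks and the use of $\lambda_k \le 1$; once this is in place, the averaging step is immediate.
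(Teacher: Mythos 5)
Your proof is correct and follows essentially the same reasoning that the paper sketches (and attributes to Martello--Toth): decompose the greedy-fill fractional optimum into the integrally-placed blocks plus the at-most-$m$ fractional critical items, upper-bound the fractional contribution by dropping the $\lambda_k$'s to $1$, and observe that the maximum of the resulting $m+1$ terms is at least their average. The one point worth making explicit --- and you flag it --- is that the feasibility of the single-critical-item candidates does rest on the standing assumption that every item fits in at least one knapsack, which must be maintained (by preprocessing) at each node of the B\&B tree where the proposition is invoked; also, when consecutive critical items coincide ($s_k = s_{k+1}$), your per-$k$ weights $\lambda_k$ must be read as the fraction of the item assigned within the $k$-th slot so that they still sum to at most one per distinct item, but the inequality you use (each $\lambda_k \le 1$) remains valid, so the averaging step goes through unchanged.
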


In the analysis of the algorithm $A^{\text{knap}}_{\alpha}$, we are going to need the following simple observation, which connects the profit of the best critical item $j^*$ (i.e. the critical item with the highest profit) with the gap of $\bm{x}'$ with respect to the optimal $\bm{x}^*$:

\begin{lmm}\label{knapsack_gap_and_critical_element}
     \[
     \frac{p_{j^*}}{\bm{p}\cdot \bm{x}^*} \ge \min\left\{\frac{1}{m+1}, \,\frac{1}{m} \cdot \left(1-\frac{\bm{p}\cdot \bm{x}'}{\bm{p}\cdot \bm{x}^*}\right)\right\}.
    \]
\end{lmm}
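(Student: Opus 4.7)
The plan is to start from a decomposition of the fractional optimum $\bm{p}\cdot\bm{x}^*$ afforded by the Dantzig greedy LP solution, and then case-split on which term attains the maximum in Proposition \ref{2_approx_knapsack}.

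First I would observe that the greedy procedure fully packs every item in $\{1,\dots,s_m-1\}$ (splitting the critical items $s_1,\dots,s_{m-1}$ across consecutive knapsacks) and additionally takes a strict fraction of $s_m$. Writing $I := \sum_{k=1}^{m}\sum_{j=s_{k-1}+1}^{s_k-1} p_j$ for the profit of the ``integrally assigned'' items appearing in Proposition \ref{2_approx_knapsack}, a direct bookkeeping yields
\[
\bm{p}\cdot\bm{x}^* \;\le\; I \;+\; \sum_{k=1}^{m} p_{s_k} \;\le\; I \;+\; m\cdot p_{j^*},
\]
since there are at most $m$ critical items and each has profit bounded by $p_{j^*}$ by definition of $j^*$.

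Next I would case-split on the argmax in the expression $\bm{p}\cdot\bm{x}' = \max\{p_{s_1},\dots,p_{s_m},\, I\}$ from Proposition \ref{2_approx_knapsack}. If a critical item attains the maximum, then $\bm{p}\cdot\bm{x}' = p_{j^*}$, and Proposition \ref{2_approx_knapsack} itself gives $p_{j^*} \ge \frac{1}{m+1}\bm{p}\cdot\bm{x}^*$, which is the first argument of the minimum. Otherwise $\bm{p}\cdot\bm{x}' = I$, and substituting into the displayed inequality yields $\bm{p}\cdot\bm{x}^* - \bm{p}\cdot\bm{x}' \le m\cdot p_{j^*}$, which rearranges to $\frac{p_{j^*}}{\bm{p}\cdot\bm{x}^*} \ge \frac{1}{m}\bigl(1 - \frac{\bm{p}\cdot\bm{x}'}{\bm{p}\cdot\bm{x}^*}\bigr)$, the second argument. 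In either branch the ratio exceeds one of the two listed lower bounds, hence exceeds their minimum.

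The only mild subtlety is the bookkeeping when a single heavy item coincides with several critical indices (so that $s_k = s_{k+1}$ for some $k$); in that case the corresponding inner sum in $I$ is simply empty, the repeated item is still counted once in the original decomposition, and the inequality $\sum_{k=1}^m p_{s_k} \le m\cdot p_{j^*}$ continues to hold with room to spare. I do not expect any other substantial obstacle.
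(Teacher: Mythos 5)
Your proof is correct. The paper states this lemma without proof, describing it as a ``simple observation'', so there is no published argument to compare against; the one you give is sound. The decomposition $\bm{p}\cdot\bm{x}^* \le I + \sum_{k=1}^m p_{s_k} \le I + m\, p_{j^*}$ follows from the structure of Dantzig's greedy LP optimum (items $1,\dots,s_m-1$ packed integrally plus a strict fraction of $s_m$, so $\bm{p}\cdot\bm{x}^* = I + \sum_{k<m} p_{s_k} + f\cdot p_{s_m}$ with $0\le f<1$), and your handling of coinciding critical indices is appropriate. One small simplification is available: since $\bm{p}\cdot\bm{x}' \ge I$ always holds (being the maximum over the $m+1$ candidates in Proposition~\ref{2_approx_knapsack}), your displayed inequality already yields $\bm{p}\cdot\bm{x}^* - \bm{p}\cdot\bm{x}' \le m\, p_{j^*}$ unconditionally, so the second argument of the minimum is a valid lower bound in every case and the case split is superfluous. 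It is not wrong, merely more than is needed; in fact it shows the lemma could be stated without the minimum at all.
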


For a fixed $0 <\alpha < 1$,  the specifications of algorithm $A^{\text{knap}}_{\alpha}$ are as follows: as input, we have a triple $(\bm{C}, \bm{w}, \bm{p})$ defining an instance of the multi-knapsack problem. At each step, we select a node $v$ (the \emph{branching node}) among the leaves (the \emph{active nodes}) of a tree we build step-by-step; each node corresponds to a sub-problem in which we fix some variables that are given by the unique path from the root to the node. The \textbf{selection} in our case occurs according to the best-first strategy, where we select the node to be processed next whose attributed upper bound (described later) is the largest of all active nodes. For convenience, let us introduce the ``dummy'' variables $\bar{x}_j = 1-\sum\limits_{i=1}^m x_{j,i}$ for $j = 1, \ldots, n$. Suppose that in the unique path from the root of the tree to $v$, we have fixed $x_{j_1, i_1} = x_{j_2, i_2} = \ldots = x_{j_k, i_k} = 1$, and $\bar{x}_{e_1} = \bar{x}_{e_2} = \ldots = \bar{x}_{e_l}=1$. In other words, items $j_1, \ldots, j_k$ are set to be included in knapsack $i_1, \ldots, i_k$ respectively; whereas items $e_1, \ldots, e_l$ are completely disposed of. Consequently, node $v$ encodes the knapsack sub-problem on the ground set $S=[n]\setminus \left(\bigcup_{z=1}^k \{j_z\} \cup \bigcup_{z=1}^l \{e_z\}\right)$ given by $(\bm{C}_v, \bm{w}_v, \bm{p}_v)$ with $\bm{C}_v = (C'_1, \ldots, C'_m)$ where $C'_i = C_i-\sum\limits_{z:\, i_z = i} w_{j_z}, \,\,\, i \in [m]$; $\bm{w}_v=\bm{w}|_{S}$ and $\bm{p}_v=\bm{p}|_{S}$. 

In the \textbf{bounding} component, a local upper and lower bound $UB(v)$ and $LB(v)$ are determined in the following manner: we consider the appropriate integer program in \eqref{knapsack_ip} with parameters $(\bm{C}_v, \bm{w}_v, \bm{p}_v)$, and its surrogate relaxation. The linear relaxation of \eqref{knapsack_surrogate} is solved to optimality by Dantzig's method giving $\bm{x}^*$, and the $(m+1)$-approximate integer solution $\bm{x}'$ is obtained according to Proposition \ref{2_approx_knapsack}. We save the sub-problem's fractional optimum and the corresponding $(m+1)$-approximate feasible solution into the variables $SUB(v)=\bm{p}_v\cdot \bm{x}^*$ and $SLB(v)=\bm{p}_v\cdot \bm{x}'$; and from these, we create a feasible solution and a local upper bound \emph{to the original problem} by putting items $j_1, \ldots, j_k$ back in knapsacks $i_1, \ldots, i_k$, respectively. Therefore, we set $UB(v) = \bm{p}_v\cdot \bm{x}^* + (p_{j_1}+\ldots p_{j_k})$, and $LB(v)= \bm{p}_v\cdot \bm{x}' + (p_{j_1}+\ldots p_{j_k})$.

Next, we expand the current tree by creating $m+1$ new sub-problems that are going to be represented by the children of $v$; this \textbf{branching} rule is determined by setting the best critical item $j^*$ as the pivot element for this sub-problem. For $i<m+1$, the $i$-th new branch is identified by fixing $x_{j^*, i} = 1$, and corresponds to putting the best critical item $j^*$ in knapsack $i$ while reducing its capacity by $w_{j^*}$. The $(m+1)$-th new branch corresponds to setting $\bar{x}_{j^*} =1$ (and $x_{j^*, 1} = \ldots = x_{j^*, m}=0$ at the same time), and means that we exclude item $j^*$ from all of the knapsacks. We calculate the pertaining upper and lower bounds of all $m+1$ sub-problems. If any of them are infeasible (including the case when item $j^*$ does not fit into knapsack $i$ for some $i$), or their upper bounds are lower than the value of an already found feasible integer solution, we prune the corresponding branch. Otherwise, we add them to the set of active nodes while removing $v$. We update the highest local upper bound ($HUB = \max\{UB(w): w\text{ is active}\}$) and the best integer solution found so far ($HLB = \max\{LB(w): w\text{ is visited}\}$). Finally, we terminate whenever the multiplicative gap between the global upper bound and the best solution found so far reaches or goes above $\alpha$; i.e. $\frac{GLB}{GUB} \ge \alpha$.

Since processing a node $v$ clearly takes polynomial time, a polynomial upper bound on the number of visited nodes in the branching tree means that the above scheme is a PTAS. We sketch a general step of the algorithm in Figure \ref{bnb_sketch}.

\begin{figure}[h]
    \centering
    \includegraphics[width=\linewidth]{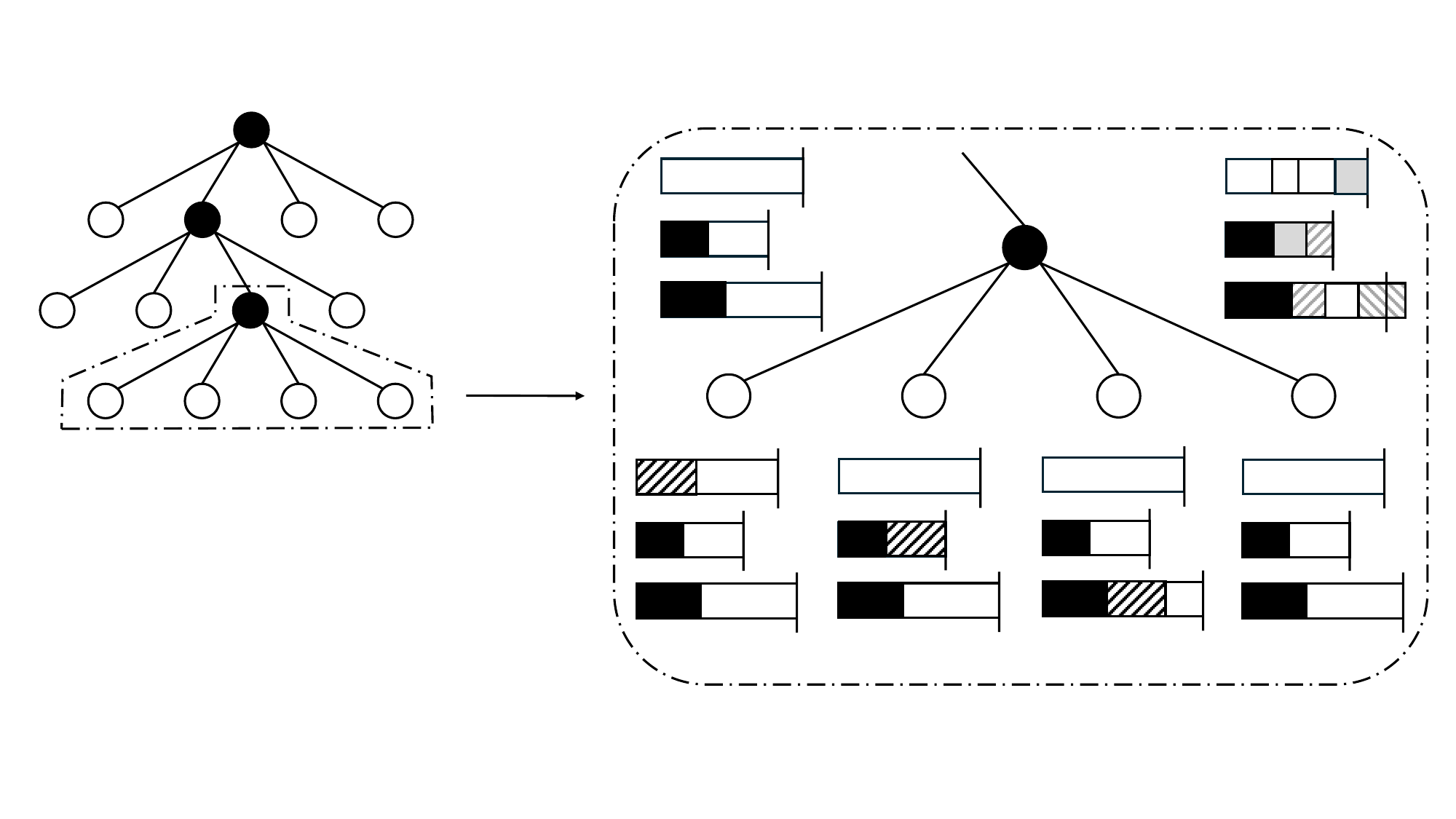}
    \caption{Depiction of a general step of our B\&B for the multi-knapsack problem. On the left is the B\&B tree built so far. White nodes correspond to active nodes. On the right is an example of the branching phase. The node in consideration is the black one. On the top left corner is the corresponding sub-problem, with the black portions of the knapsacks representing the items fixed so far. On the top right, the fractional optimal solution, with $m$ critical items highlighted in grey. The most profitable critical item is fixed on each of the $m+1$ branches, on the $i$-th branch in knapsack $i$ and on the $(m+1)$-th branch it is disposed of.}
    \label{bnb_sketch}
\end{figure}

% The intuition behind our algorithm is the following: we traverse the branch-and-bound tree in a way such that the node with the best (greatest) upper bound will be processed at a given step. When selecting the pivot element, we have two possible cases. In the first case, the item has a small individual profit (compared to the sum of all profits), in which case the integer solution obtained from rounding the fractional solution is very close to the original fractional solution. Since the node we processed has the best upper bound, in this case, the rounded integer solution will also be very close to the global fractional optimum; thus we can halt, having found a solution very close to the optimum. In the other case, the pivot element has a relatively large profit. But after the algorithm goes deep enough in the branching tree, this case cannot happen as the number of items whose individual profit is comparable to the sum of all profits is bounded. Consequently, the algorithm sooner or later will process a node at a certain depth where the corresponding sub-problem has a sufficiently large number of "big" items already put inside the knapsack, in which case the rounding method incurs a change so small that it guarantees an approximate solution.

\begin{proof}{\textbf{Proof of Theorem \ref{multi_knapsack_ptas}}.}
    The proof relies on deriving an upper bound on the number of visited nodes. Let $F = F(\bm{C}, \bm{w}, \bm{p})$ denote the branching tree at termination. On a root-leaf path in $F$, we call an edge \emph{right-turn} if the child node of the edge is the rightmost child of the parent out of the $m+1$ potential children. In other words, the node encodes a step when we leave an item out from all of the knapsacks. Any other step in the path will be called a \emph{left-turn}. The theorem is a simple consequence of the following lemma:

    \begin{lmm}\label{root-leaf}
        There exists a constant $c_{\alpha,m}$ (that depends on $\alpha$ and $m$) such that in every possible tree $F= F(\mathbf{C}, \mathbf{w}, \mathbf{p})$ obtained from an input $(\mathbf{C}, \mathbf{w}, \mathbf{p}$) to the multi-knapsack problem with $m$ knapsacks, every root-leaf path contains at most $c_{\alpha,m}$-many left-turns.
    \end{lmm}

    This lemma indeed implies our theorem, since it implies that the number of different root-leaf paths in $F$ is at most $\binom{n}{c_{\alpha,m}}\cdot m^{c_{\alpha,m}}$, so we can have at most $O(n^{c_{\alpha,m}+1} \cdot m ^{c_{\alpha,m}})$ nodes in $F$.
\end{proof}

\begin{proof}{\textbf{Proof of Lemma \ref{root-leaf}.}}
    For a given $\alpha$ and $n$, let $l(\alpha, n)$ be the maximum number of left-turns in $F$ for any input of $n$ items. Formally,
    \[
    %l(\alpha, n) = \sup_{\text{len}(\bm{C}) = \text{len}(\bm{w}) = \text{len}(\bm{p}) = n} \{\text{number of left-turns in } F(\bm{C}, \bm{w}, \bm{p})\}
    %ELE da qui
    l(\alpha, n) = \sup_{\bm{w}, \bm{p} \in \mathbb{N}^n, \; \bm{C} \in \mathbb{N}^m} \{\text{max. number of left-turns in any path of } F(\bm{C}, \bm{w}, \bm{p})\}
    \]
    Since the maximal number of left-turns is at most $n$ in every possible path in every possible branching tree with $n$ items, we can replace the supremum with maximum. Therefore, we can consider an infinite sequence $(\bm{C}_n, \bm{w}_n, \bm{p}_n)$ and $F_n$ realizing the maximum for each $n$. Our goal now is to prove that $l(\alpha, n)$ remains constant as $n$ tends to infinity. Let us fix $n$, and the corresponding triple $(\bm{C}_n, \bm{w}_n, \bm{p}_n)$ with $F_n$ such that they realize $l(\alpha, n)$. For the sake of simplicity, we will omit $n$ from the notation, and simply consider $(\bm{C}, \bm{w}, \bm{p})$ and $F$.

    Let $F'$ be the inner nodes of $F$ (the tree we get by getting rid of all leaves). Since $|F'| \ge \frac{1}{m+1} \cdot |F|$ (as each group of at most $m+1$ sibling leaves in $F$ has one unique parent in $F'$), it is enough to prove the upper bound on left-turns in $F'$. Consider a root-leaf path in $F'$ with $l(\alpha, n)$-many left-turns, and let the corresponding nodes be $v_1, v_2, \ldots, v_{l(\alpha, n)}$, with items $j_1, \ldots, j_{l(\alpha,n)}$ fixed to be in knapsacks $i_1, \ldots, i_{l(\alpha, n)}$ along the path.

    For the node $v_t$ with $1\le t\le l(\alpha, n)$, let $\bm{x}^* _{v_t}$ and $\bm{x}'_{v_t}$ denote the fractional optimum of the sub-problem's surrogate relaxation and its $(m+1)$-approximate integer rounding, with objective function values $SUB(v_t)$ and $SLB(v_t)$, respectively. Let $\frac{SLB(v_t)}{SUB(v_t)} = r_t$. Recall that $r_t \ge \frac{1}{m+1}$, and that 
    \[
    \frac{LB(v_t)}{UB(v_t)} = \frac{SLB(v_t) + p_{j_1} + \ldots + p_{j_{t-1}}}{SUB(v_t) + p_{j_1} + \ldots + p_{j_{t-1}}}.
    \]
    
    The key observations are the following:
    \begin{itemize}
        \item Since $v_t$ is the inner node of $F$, the algorithm processed it in a previous step and did not halt. Consequently, if $HLB'$ and $HUB'$ denote the back-then global lower and upper bounds, we have that $\alpha > \frac{HLB'}{HUB'}$.
        
        \item $HLB'$ was defined as the best of all integer solutions found up to visiting node $v_t$, so $HLB' \ge LB(v_t)$.
        
        \item By the choice of the tree-traversal strategy being best-first, $UB(v_t) = HUB'$.
    \end{itemize}
    
    Putting these together, we have that
    \[
    \alpha > \frac{HLB'}{HUB'} \ge \frac{LB(v_t)}{UB(v_t)} = \frac{SLB(v_t) + p_{j_1} + \ldots + p_{j_{t-1}}}{SUB(v_t) + p_{j_{1}} + \ldots +p_{j_{t-1}}} = 
    \]
    \[
    =\frac{r_t \cdot SUB(v_t) + p_{j_1} + \ldots + p_{j_{t-1}}}{SUB(v_t) + p_{j_1} + \ldots + p_{j_{t-1}}} = r_t + (1-r_t)\cdot \frac{p_{j_1} + \ldots + p_{j_{t-1}}}{SUB(v_t) + p_{j_1} + \ldots + p_{j_{t-1}}} =
    \]
    \[
    =  r_t + (1-r_t)\cdot \frac{\frac{p_{j_1}}{SUB(v_t)} + \ldots + \frac{p_{j_{t-1}}}{SUB(v_t)}}{1 + \frac{p_{j_1}}{SUB(v_t)} + \ldots + \frac{p_{j_{t-1}}}{SUB(v_t)}}.
    \]
    As an immediate consequence, we have that $\alpha > r_t$, and so
    \begin{equation}\label{inductive_bound}
        1-r_t > 1- \alpha, \,\,\, t = 1, \ldots, l(\alpha, n).
    \end{equation}

    With a universal bound (in $t$) established in \eqref{inductive_bound}, we can now focus on the last left-turn in the path, $v_{l(\alpha, n)}$. Let us introduce a shorthand notation for the fraction on the right-hand side, and repeat what we have for $t=l(\alpha, n)$:
    \[
    f_{l(\alpha, n)} := \frac{\frac{p_{j_1}}{SUB(v_{l(\alpha, n)})} + \ldots + \frac{p_{j_{l(\alpha, n)-1}}}{SUB(v_{l(\alpha, n)})}}
    {1 + \frac{p_{j_1}}{SUB(v_{l(\alpha, n)})} + \ldots + \frac{p_{j_{l(\alpha, n)-1}}}{SUB(v_{l(\alpha, n)})}},
    \]
    and
    \begin{equation}\label{alpha}
        \alpha > r_{l(\alpha, n)} + (1-r_{l(\alpha, n)}) \cdot f_{l(\alpha, n)} = 1 - (1-r_{l(\alpha, n)})(1-f_{l(\alpha,n)}).
    \end{equation}

    \begin{lmm}\label{f}
        \[
        f_{l(\alpha, n)} \ge \frac{(l(\alpha, n) -1)\cdot \min\left\{\frac{1}{m+1},\, \frac{1}{m} \cdot (1-\alpha)\right\}}{1+(l(\alpha, n) -1)\cdot \min\left\{\frac{1}{m+1},\, \frac{1}{m} \cdot (1-\alpha)\right\}}.
        \]
    \end{lmm}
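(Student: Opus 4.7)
The plan is to bound each ratio $p_{j_t}/SU(v_{l(\alpha,n)})$ appearing in the numerator of $f_{l(\alpha,n)}$ from below by the constant $M := \min\{1/(m+1),\, (1-\alpha)/m\}$, sum these bounds, and then exploit the monotonicity of $x \mapsto x/(1+x)$.

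For the first step, I would apply Lemma \ref{knapsack_gap_and_critical_element} at each inner node $v_t$ in the path. Since $j_t$ is chosen as the best critical item of the surrogate relaxation computed at $v_t$, the lemma yields
\[
\frac{p_{j_t}}{SU(v_t)} \;\ge\; \min\!\left\{\tfrac{1}{m+1},\; \tfrac{1}{m}(1-r_t)\right\}.
\]
Combining this with the uniform bound $1-r_t > 1-\alpha$ from \eqref{inductive_bound}, we obtain $p_{j_t}/SU(v_t) \ge M$ for every $t = 1,\dots,l(\alpha,n)-1$.

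The key intermediate observation — and the main thing that needs to be verified carefully — is that the surrogate upper bound is monotone along the branching path, i.e.\ $SU(v_{l(\alpha,n)}) \le SU(v_t)$ for every $t\le l(\alpha,n)$. This follows because moving from $v_t$ to $v_{t+1}$ amounts to deleting item $j_{t}$ from the ground set and reducing the total surrogate capacity by $w_{j_t}$; any feasible solution of the LP surrogate at $v_{t+1}$ extends to a feasible solution at $v_t$ of the same profit by assigning $0$ to the removed item, so the LP optimum cannot increase as we descend. This is the only place where the specific structure of the surrogate relaxation really enters, so I expect this to be the step where care is most needed — though the argument itself is short.

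Once monotonicity is in hand, each term in the numerator defining $f_{l(\alpha,n)}$ satisfies
\[
\frac{p_{j_t}}{SU(v_{l(\alpha,n)})} \;\ge\; \frac{p_{j_t}}{SU(v_t)} \;\ge\; M,
\]
so the numerator is at least $(l(\alpha,n)-1)\,M$. Writing $S$ for this numerator, we have $f_{l(\alpha,n)} = S/(1+S)$, and since $x\mapsto x/(1+x)$ is increasing on $[0,\infty)$, plugging in the lower bound $S \ge (l(\alpha,n)-1)M$ yields the claimed inequality. No further case analysis should be required.
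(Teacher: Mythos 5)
Your proposal is correct and follows essentially the same route as the paper's proof: monotonicity of the surrogate LP optimum $SU(\cdot)$ along the path (so each $SU(v_t)$ dominates $SU(v_{l(\alpha,n)})$), the per-node bound from Lemma \ref{knapsack_gap_and_critical_element} combined with $1-r_t > 1-\alpha$, and finally the monotonicity of $x \mapsto x/(1+x)$. One small note: since $v_t$ and $v_{t+1}$ are consecutive \emph{left-turn} nodes but not necessarily parent and child in the tree, the transition may also involve intervening right-turns that delete further items; this does not affect the conclusion — every such operation only shrinks the surrogate LP's feasible region, which is the form of the argument the paper gives — but your description of the transition as a single item deletion plus capacity reduction is slightly narrower than what actually occurs.
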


    \begin{proof}{\textbf{Proof.}}
        First, note that
        \[
        SUB(v_{l(\alpha, n)}) \le SUB(v_{l(\alpha, n)-1}) \le \ldots \le SUB(v_2) \le SUB(v_1),
        \]
        since any two consecutive nodes in the sequence $v_i$ and $v_{i+1}$ encode problems where the problem in node $v_{i+1}$ is a sub-problem of the one given by node $v_i$. Second, recall from Lemma \ref{knapsack_gap_and_critical_element} that 
        \[
        \frac{p_{j_t}}{SUB(v_t)} \ge \min\left\{\frac{1}{m+1}, \,\frac{1}{m}\cdot (1-r_t) \right\}, \,\,\, t = 1, \ldots, l(\alpha, n).
        \]
        Last, notice that $f_{l(\alpha, n)}$ is a fraction of the type $\frac{x}{x+1}$, therefore it is monotone increasing in the value of the numerator. Combining these observations with \eqref{inductive_bound}, we can write that
        \[
        f_{l(\alpha, n)} = \frac{\frac{p_{j_1}}{SUB(v_{l(\alpha, n)})} + \ldots + \frac{p_{j_{l(\alpha, n)-1}}}{SUB(v_{l(\alpha, n)})}}
        {1 + \frac{p_{j_1}}{SUB(v_{l(\alpha, n)})} + \ldots + \frac{p_{j_{l(\alpha, n)-1}}}{SUB(v_{l(\alpha, n)})}} \ge 
        \frac{\frac{p_{j_1}}{SUB(v_1)} + \ldots + \frac{p_{j_{l(\alpha, n)-1}}}{SUB(v_{l(\alpha, n)-1})}}
        {1 + \frac{p_{j_1}}{SUB(v_1)} + \ldots + \frac{p_{j_{l(\alpha, n)-1}}}{SUB(v_{l(\alpha, n)-1})}} \ge
        \]

        \[
        \ge \frac{\min\left\{\frac{1}{m+1}, \, \frac{1}{m}\cdot (1-r_1)\right\} + \ldots + \min\left\{\frac{1}{m+1}, \, \frac{1}{m}\cdot (1-r_{l(\alpha,n)-1})\right\}}{1+\min\left\{\frac{1}{m+1}, \, \frac{1}{m}\cdot (1-r_1)\right\} + \ldots + \min\left\{\frac{1}{m+1}, \, \frac{1}{m}\cdot (1-r_{l(\alpha,n)-1})\right\}} \ge 
        \]
        \[
        \ge \frac{(l(\alpha, n) -1)\cdot \min\left\{\frac{1}{m+1},\, \frac{1}{m} \cdot (1-\alpha)\right\}}{1+(l(\alpha, n) -1)\cdot \min\left\{\frac{1}{m+1},\, \frac{1}{m} \cdot (1-\alpha)\right\}}.
        \]
        \hfill \Halmos
    \end{proof}

    Now suppose for contradiction that $\lim\limits_{n \to \infty} l(\alpha,n) = \infty$. Then \eqref{alpha} implies
    \[
    \alpha \ge 1- \lim\limits_{n \to \infty} (1-r_{l(\alpha,n)})(1-f_{l(\alpha, n)}) = 1,
    \]
    since $1 \ge r_{l(\alpha, n)} > 0$ and 
    \[
    \lim\limits_{n \to \infty} f_{l(\alpha, n)} \ge \lim\limits_{n \to \infty} \frac{(l(\alpha, n) -1)\cdot \min\left\{\frac{1}{m+1},\, \frac{1}{m} \cdot (1-\alpha)\right\}}{1+(l(\alpha, n) -1)\cdot \min\left\{\frac{1}{m+1},\, \frac{1}{m} \cdot (1-\alpha)\right\}} = 
    \]
    \[
    =\lim\limits_{x \to \infty} \frac{x}{1+x}=1.
    \]
    However, $1 > \alpha >0$ was assumed. Contradiction. \hfill \Halmos
\end{proof}

\begin{remark}
    With a more careful analysis, we can determine an exact bound on $l(\alpha, n)$ that depends on alpha, just like we do in Theorems \ref{unrel_machine_ptas} and \ref{fptas}. However, the constants in the exponent are not so concise and descriptive as in the other cases, so we decided not to formulate the theorem with the exact values. 
    
    Observe that \eqref{alpha} implies $\alpha > f_{l(\alpha, n)}$, and assume that in Lemma \ref{f}, the minimum is obtained by $\frac{1}{m}\cdot (1-\alpha)$. It follows that 
    \[
    \alpha > \frac{(l(\alpha,n)-1)\cdot \frac{1}{m}\cdot (1-\alpha)}{1+(l(\alpha,n)-1)\cdot \frac{1}{m}\cdot (1-\alpha)} = 1- \frac{1}{1+(l(\alpha,n)-1)\cdot \frac{1}{m}\cdot (1-\alpha)},
    \]
    and 
    \[
    1+(l(\alpha,n)-1)\cdot \frac{1}{m}\cdot (1-\alpha) < \frac{1}{1-\alpha},
    \]
    so
    \[
    l(\alpha,n) < \left(\frac{1}{1-\alpha} -1\right)\cdot \frac{m}{1-\alpha} + 1= \frac{m\alpha}{(1-\alpha)^2}+1.
    \]
    On the other hand, if the minimum in Lemma \ref{f} is achieved by $\frac{1}{m+1}$, a similar analysis shows that 
    \[
    l(\alpha, n) < \frac{m+1}{1-\alpha} + 1.
    \]
    Consequently, we may choose
    \[
    c_{\alpha, m} = 1+\max\left\{\frac{m\alpha}{(1-\alpha)^2}, \frac{m+1}{1-\alpha}\right\}.
    \]
\end{remark}

\section{A B\&B EPTAS for the Unrelated Machine Scheduling Problem}\label{sec:js_ptas}

In the \emph{unrelated parallel machine scheduling} problem, $n$ jobs are assigned to $m$ machines to minimize the \emph{makespan} $\max\{C_S(i) : i = 1, \dots, m\}$, where $C_S(i)$ is the completion time of machine $i$ according to the schedule $S$. Each job $j$ has a machine-dependent processing time $p_{j, i} \in \mathbb{N}$. The problem is strongly $\textsf{NP}$-complete when $m$ is part of the input \cite{garey_johnson_1}, ruling out an FPTAS unless $\textsf{P} = \textsf{NP}$. Furthermore, even a PTAS would imply $\textsf{P} = \textsf{NP}$ \cite{lenstra_shmoys_tardos}. When $m$ is a fixed constant, the problem is denoted by $Rm || C_{\max}$ (following \cite{machine_scheduling_review}), and an FPTAS is possible \cite{horowitz_sahni_job_fptas,grouping_1}. See, e.g. \cite{job_survey} for a survey of more recent advances. Several applications of the B\&B framework were developed for machine scheduling problems, with diverse lower bound strategies ranging from surrogate relaxations \cite{van_de_velde} to lagrangian relaxations \cite{martello_soumis_toth}.  

In this section, we study a B\&B implementation for $Rm || C_{\max}$. At a given node, the corresponding sub-problem is modelled as an integer program. Its LP-based relaxation is solved by a Binary Search (BS) subroutine in the \textbf{bounding} component, followed by a common rounding technique to determine an $(m+1)$-approximate schedule (see \cite{vazirani}, Chapter $17.3$). Then, we \textbf{branch} according to the fractional job that maximizes the minimal processing time $\min \{p_{j, i}: i \in [m]\}$. The \textbf{selection} strategy is again the best-first rule, where the active node with the lowest fractional optimum (denoted by the acronym LLB in the experiments) is picked for processing. We stop whenever the ratio between the global lower bound and the best integer schedule discovered so far reaches or goes below $(1+\varepsilon)$, where $\varepsilon>0$ is a fixed constant. We provide exact details of the algorithm $A^{\text{unrel}}_{\varepsilon}$ in Subsection \ref{sec:js_ptas_proof}, where we show the following:

\unrel

\subsection{Proof of Theorem \ref{unrel_machine_ptas}}\label{sec:js_ptas_proof}

In this section, we consider the problem $Rm||C_{max}$. Our analysis will rely on a \emph{self-similarity} property of the problem; that is, each node of the branching tree must correspond to the same class of LP-formulations. For the sake of exploiting this property, we need that fixing a job to any of the machines should yield another machine scheduling problem. This is not true for the default description, so we introduce the concept of \emph{overheads} denoting the earliest time machines can start completing jobs. Fixing a job $j$ to machine $i$ now corresponds to increasing the overhead of machine $i$ by $p_{j, i}$. Machine $i$ with an overhead $t_i \in \mathbb{N}$ has a completion time $C_S'(i)= t_i + C_S(i)$.

Again, the \textbf{bounding} and \textbf{branching} components will heavily rely on an integer programming formulation of the problem and its linear relaxation. The most straightforward formulation, however, gives rise to some concerns. Instead, we opt to follow in the footsteps of Vazirani \cite{vazirani} and Lenstra, Shmoys, and Tardos \cite{lenstra_shmoys_tardos}. Their proofs rely on a technique called \emph{parametric pruning}, which consists of a binary search for a ``guess'' on the optimal integer makespan while disregarding job-machine pairings that immediately exceed the current guess. For this purpose, they define the following modification of the ``standard'' program: for a given $\bm{P} \in \mathbb{N}^{n\times m}$ and ``guess'' $T \in \mathbb{N}$, let $S_T$ be the set of (job, machine) pairings which do not immediately violate the time limit $T$.
\[
S_T := \{(j,i): p_{j,i} \le T\}.
\]

\begin{definition}
    For $\bm{P} \in \mathbb{N}^{n\times m}, \,\, T \in \mathbb{N}$ and $\bm{t} \in \mathbb{N}^m$, let $PARTIAL-LP-MS_m(\bm{P}, \bm{t}, T)$ be the polyhedron determined by the following set of inequalities:
    \begin{align}\label{partial_machine_scheduling_lp}
        \begin{cases}
            \sum\limits_{i: (j,i) \in S_T} x_{j,i} = 1, & j\in [n], \\
            \sum\limits_{j: (j,i) \in S_T} p_{j,i} \cdot  x_{j,i} \le T-t_i, & i \in [m], \\
            x_{j,i} \ge 0, & (j,i) \in S_T.
        \end{cases}
    \end{align}
    \label{def:binary_search_poly}
\end{definition}

The key properties described in \cite{lenstra_shmoys_tardos} and \cite{vazirani} are easily transcribed to our version with overheads with little to no modification, since they are only dependent on the constraint matrix describing \eqref{partial_machine_scheduling_lp} and not on the right-hand side of the inequalities. Let us recall that for a feasible solution $\bm{x}$, job $j$ is called \emph{fractional} if there exists $i$ such that $x_{j,i}$ does not equal $0$ or $1$ (and therefore has a fractional value); otherwise job $j$ is called \emph{integral}.

\begin{lmm}[Lenstra, Shmoys, Tardos; \cite{lenstra_shmoys_tardos}]\label{rounding}
    If the linear program described in \eqref{partial_machine_scheduling_lp} is feasible, then each vertex $\bm{x}^*$ has at most $m$ fractional jobs. Furthermore, there exists an injection from fractional jobs to the $m$ machines such that each fractional job $j$ is matched to a machine $i$ where $x^*_{j,i}\ne 0$. Moreover, the schedule we get by keeping integral jobs in $\bm{x}^*$ and reassigning fractional jobs to machines according to the injection has a makespan of at most $2T$.
\end{lmm}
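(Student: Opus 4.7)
The plan is to follow the classical Lenstra--Shmoys--Tardos rounding argument, adapted to accommodate the machine overheads $\bm{t}$. The proof divides naturally into three steps: a dimension count bounding the number of fractional jobs, a bipartite graph argument producing the injection, and a short computation controlling the makespan.

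For the fractional-job count, I would form the bipartite \emph{support graph} $G$ whose vertex classes are the $n$ jobs and the $m$ machines, with an edge $(j,i)$ for every variable with $x^*_{j,i}>0$. At a vertex of \eqref{partial_machine_scheduling_lp}, the set of tight linearly independent constraints must span $\mathbb{R}^{|S_T|}$; besides the $n$ assignment equalities and at most $m$ tight capacity inequalities, the only extra constraints available are the non-negativities $x_{j,i}\ge 0$, which are tight exactly at the zero coordinates. A direct count then forces the number of positive variables, i.e., the number of edges of $G$, to be at most $n+m$. Since each integral job contributes exactly one edge and each fractional job at least two, writing $n_f$ for the number of fractional jobs gives $(n-n_f)+2n_f\le n+m$, hence $n_f\le m$.

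To build the injection, I would look at the subgraph $H$ of $G$ induced by the fractional jobs together with their neighbouring machines. The same counting shows $H$ has at most as many edges as vertices, so each connected component of $H$ is a pseudotree (a tree plus at most one extra edge forming a single cycle). In each such component I can match every fractional job to a distinct adjacent machine by repeatedly peeling leaves, which must be machines since fractional jobs have degree at least two; once only a cycle remains, it is bipartite and thus alternates jobs and machines in equal numbers, so it is matched around. Stitching these partial matchings together gives the desired injection with $x^*_{j,i}>0$ along every matched pair.

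For the makespan bound, I would fix a machine $i$. The LP capacity constraint yields $\sum_j p_{j,i}\, x^*_{j,i}\le T-t_i$, so the total processing time of the integral jobs assigned to $i$ is at most $T-t_i$. The injection adds at most one fractional job, whose processing time on $i$ is at most $T$ because every edge of $H$ lies in $S_T$. Therefore the post-overhead completion time satisfies $C'_S(i)=t_i+C_S(i)\le t_i+(T-t_i)+T=2T$, as required.

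I expect the most delicate step to be the matching argument on $H$: the pseudoforest structure is precisely what makes the peeling-leaves construction succeed, and one has to be careful with components consisting of an even cycle plus pendant trees to ensure every fractional job ends up matched to a machine on which it had positive fractional weight. The rest is essentially bookkeeping on standard LP vertex geometry together with the overhead-adjusted capacity constraint.
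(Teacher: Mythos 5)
The paper itself does not reprove this lemma; it cites \cite{lenstra_shmoys_tardos} and \cite{vazirani} and simply notes that introducing the overheads $t_i$ only changes the right-hand side of \eqref{partial_machine_scheduling_lp}, so the classical vertex-structure arguments carry over. Your write-up is therefore a full reconstruction of the cited argument, and most of it — the dimension count giving at most $n+m$ positive coordinates, the deduction $n_f\le m$, the leaf-peeling and cycle-matching construction of the injection, and the $t_i+(T-t_i)+T=2T$ computation — is the standard and correct route.

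There is, however, a genuine gap in the middle step. From the global bound on positive variables you infer that ``the same counting shows $H$ has at most as many edges as vertices, so each connected component of $H$ is a pseudotree.'' Neither claim actually follows from the global count. First, $|E(G)|\le n+m$ only gives $|E(H)|\le n_f+m$, whereas $|V(H)|=n_f+m'$ with $m'$ the number of machines adjacent to a fractional job; $m'$ can be strictly smaller than $m$ (a machine whose capacity constraint is tight purely from integral jobs contributes to the count of tight constraints but does not appear in $H$). Second, and more seriously, even a correct \emph{global} inequality $|E(H)|\le|V(H)|$ does not rule out a component carrying two independent cycles compensated by a tree component elsewhere — and on such a component your peeling and cycle arguments would break down, since it could have strictly more fractional jobs than machines. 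The pseudoforest property must be established \emph{component by component}. The standard repair, which is precisely what Lenstra--Shmoys--Tardos and Vazirani do, is to observe that the restriction of a vertex of \eqref{partial_machine_scheduling_lp} to the variables supported on a single connected component $C$ of $G$ is itself a vertex of the LP restricted to $C$ (otherwise a perturbation supported on $C$ alone, extended by zero, would contradict extremality of the original point). Applying your dimension count to that restricted LP gives $e_C\le |C|$ for every component, after which the peeling, the cycle matching, and the makespan bound all go through as you wrote them.
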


Lenstra et al. designed a binary search procedure (starting from an arbitrary integer schedule) for the smallest integer value of $T$ for which the program in \eqref{partial_machine_scheduling_lp} is feasible. They prove that their procedure runs in polynomial time.

\begin{prop}[Lenstra, Shmoys, Tardos; \cite{lenstra_shmoys_tardos}]\label{2_approx_machine}
    Let $T'$ be the result of the binary search; i.e. the smallest integer $T$ for which \eqref{partial_machine_scheduling_lp} is feasible. Furthermore, let $T_{opt}$ be the makespan of the optimal schedule. Then the rounding procedure from Lemma \ref{rounding} applied to a schedule with makespan $T'$ yields an integer schedule with makespan at most $2T_{opt}$.
\end{prop}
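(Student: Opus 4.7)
The plan is to decompose the desired bound $2T_{opt}$ into two inequalities whose combination yields the claim: (a) the binary search does not overshoot, i.e.\ $T'\le T_{opt}$; and (b) applying the rounding of Lemma \ref{rounding} to any vertex of \eqref{partial_machine_scheduling_lp} at $T=T'$ yields an integer schedule whose makespan is at most $2T'$. Chaining (a) and (b) gives the proposition.

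For (a), the idea is to exhibit an explicit feasible point of the parametric LP at $T=T_{opt}$ (or at $\lceil T_{opt}\rceil$, since $T$ is restricted to integers in the definition). Starting from any optimal schedule with makespan $T_{opt}$, every machine $i$ satisfies $t_i + \sum_j p_{j,i} x_{j,i} \le T_{opt}$, which in particular forces $p_{j,i} \le T_{opt}$ for every pair $(j,i)$ receiving positive weight; hence $(j,i)\in S_{T_{opt}}$ and the witness lies inside the pruned LP's feasible region. By minimality of $T'$ this yields $T'\le T_{opt}$. The delicate point is that the pruned LP at a smaller $T$ is \emph{not} simply a restriction of the one at a larger $T$ (the set of variables $S_T$ also shrinks), so one really has to build the witness from a schedule whose loads are bounded by $T_{opt}$ rather than appealing to monotonicity.

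For (b), I would take any vertex $\bm{x}^*$ of \eqref{partial_machine_scheduling_lp} at $T=T'$ and apply Lemma \ref{rounding}. The lemma provides at most $m$ fractional jobs and an injection $\sigma$ from them into the machines such that $x^*_{j,\sigma(j)} > 0$. Membership of $(j,\sigma(j))$ in $S_{T'}$ then guarantees $p_{j,\sigma(j)} \le T'$. Reassigning each fractional job $j$ entirely to $\sigma(j)$ therefore adds at most one processing time of value $\le T'$ per machine (by injectivity of $\sigma$), on top of an integral load that, together with the overhead $t_i$, is already bounded by $T'$ thanks to the LP's capacity constraint. Hence every machine's completion time is at most $2T'$.

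The main obstacle is the parametric-pruning subtlety in step (a); step (b) is essentially a direct invocation of Lemma \ref{rounding} once the structural properties of vertices are in hand. Nothing in the argument uses the fact that the problem has a fixed number of machines, so the proposition is really a generic transcription of the Lenstra--Shmoys--Tardos rounding to the overhead-augmented formulation introduced for the branch-and-bound setting.
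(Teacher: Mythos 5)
The paper does not prove this proposition; it is stated as a citation of Lenstra, Shmoys, and Tardos, so there is no in-paper proof to compare against. Your decomposition into (a) $T'\le T_{opt}$ and (b) rounding a vertex at $T=T'$ yields makespan at most $2T'$ is indeed the standard LST route, and your part (b) is a correct invocation of Lemma~\ref{rounding}: each fractional job $j$ is shipped to $\sigma(j)$ with $p_{j,\sigma(j)}\le T'$ by membership in $S_{T'}$, on top of an integral-plus-overhead load already bounded by $T'$. Part (a), however, has a genuine gap. The step ``$t_i + \sum_j p_{j,i} x_{j,i}\le T_{opt}$ \ldots in particular forces $p_{j,i}\le T_{opt}$ for every pair $(j,i)$ receiving positive weight'' is valid only when $x_{j,i}\in\{0,1\}$. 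For a fractional coordinate, $p_{j,i}x_{j,i}\le T_{opt}$ says nothing about $p_{j,i}$ alone, so the fractional optimum need not be a feasible point of the pruned LP at $T=T_{opt}$, and $T'\le T_{opt}$ does not follow. Read literally with $T_{opt}$ the fractional optimum, the proposition is in fact false for $m\ge 3$: take one job, zero overheads, and $p_{1,i}=m$ for every machine $i$; the fractional LP optimum is $1$ (split evenly), but $S_T=\emptyset$ for all integer $T<m$, so $T'=m$, and the rounded (already integral) schedule has makespan $m>2=2T_{opt}$.

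The witness you want at $T=T_{opt}$ is an \emph{integer} optimal schedule: if $x_{j,i}=1$ then $p_{j,i}\le C_S(i)\le T_{opt}$, so $(j,i)\in S_{T_{opt}}$ and the integer optimum sits inside the pruned polytope, giving $T'\le T_{opt}$ by minimality. With that substitution the chain $2T'\le 2T_{opt}$ goes through, and this is exactly what Lenstra, Shmoys, and Tardos prove (a $2$-approximation against the optimal \emph{integer} makespan). In short, the skeleton of your argument is right and you correctly flagged that the pruned LPs are not nested as $T$ shrinks, but you used the wrong witness (a fractional schedule instead of an integral one); the word ``fractional'' in the paper's statement of $T_{opt}$ appears to be a slip and should be read as the optimal integer makespan.
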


For a fixed $\varepsilon > 0$, the specifications of algorithm $A^{\text{unrel}}_{\varepsilon}$ are as follows: as input, we have a matrix $\bm{P} \in \mathbb{N}^{n \times m}$ defining an instance of the unrelated machine scheduling problem. The overhead at the beginning is $\bm{t}\equiv 0$. At each step, we select a node $v$ (the \emph{branching node}) among the leaves (the \emph{active nodes}) of a tree we build step-by-step; each node corresponds to a subproblem in which we fix some job-machine pairings identified by the unique path from the root to the node. The \textbf{selection} in our case occurs according to the best-first selection rule, where we select the node to be processed next whose attributed lower bound (described later) is the smallest of all active nodes. Suppose that in the unique path of length $k$ from the root of the tree to $v$, we have fixed $x_{j_1, i_1} = x_{j_2, i_2} = \ldots = x_{j_k, i_k} = 1$. In other words, job $j_1$ is fixed to machine $i_1$, job $j_2$ is fixed to machine $i_2$, and so on. Consequently, node $v$ encodes the sub-problem with jobs $S=[n]\setminus \bigcup_{z=1}^k \{j_z\}$ given by processing times $\bm{P}_v=\bm{P}|_{S \times [m]}$ and overhead vector determined by the already fixed job-machine pairings: $\bm{t}_v= (t_1, \ldots, t_m)$ with $t_i= \sum\limits_{z: i_z = i} p_{j_z, i_z}, \,\, i \in [m]$. With these, the \textbf{bounding} takes place: a local lower and upper bound $LB(v)$ and $UB(v)$ are determined by applying the binary search of Lenstra et al. to find the smallest integer $T$ (denoted by $T'$) for which \eqref{partial_machine_scheduling_lp} is feasible, and by rounding a vertex of the corresponding polyhedron $PARTIAL-LP-MS_m(\bm{P}_v, \bm{t}_v, T')$ to an integer assignment with makespan at most $(m+1)\cdot T'$. The rounding consists of assigning each fractional job to the machine where its processing time is minimal. We then \textbf{branch} according to the fractional job $j$ whose minimal processing time ($\min\{p_{j,i}: i\in [m]\}$) is maximal. Branch $i$ out of the $m$ new branches fixes job $j$ to machine $i$ and increases its overhead by $p_{j, i}$. 

We calculate the local lower and upper bounds of all $m$ new subproblems. If their lower bounds are greater than the makespan of an already found integer solution, we \textbf{prune} them. Otherwise, we add them to the set of active nodes while removing $v$. We update the global lower and upper bounds $LLB$ (Lowest Lower Bound) and $LUB$ (Lowest Upper Bound): at a given step, they are defined as the minimal local lower bound of the active nodes and the best makespan of an integer solution found so far, respectively. Finally, we terminate whenever the multiplicative gap between the global lower bound and the current champion makespan, $\frac{LUB}{LLB}$, reaches or goes below $1+\varepsilon$.

%\unrel

%\begin{remark}
    %We deviated from Lenstra et al. in rounding up a fractional assignment solely for the purpose of the analysis. In order to provide a relationship between the lower bound and the increase incurred by rounding, we needed to connect the makespan-increase with the shortest processing times of the fractional jobs. An explicit lower bound, however, cannot be expressed by using processing times other than the shortest ones, and even the optimal matching in $G(\bm{x})$ is unable to guarantee that the increase will be bounded by a a combination of shortest processing time.
%\end{remark}

%The underlying idea of the algorithm is identical to the one for the knapsack problem. There is a limited number of ``big'' jobs (according to some notion of size), whose processing time is the same order of magnitude as the best fractional makespan. The pivoting rule will select the ``biggest'' fractional job, so by the time the algorithm reaches a certain depth in the tree, all the ``big'' jobs will have been fixed before on the root-leaf path to the current node. Consequently, the next pivot job will be small in comparison to the fractional optimum, and rounding up the fractional solution yields a schedule where the makespan is close to the original fractional makespan, which in turn is close to the global optimum thanks to the best-first tree traversal strategy.

\proof{\textbf{Proof of Theorem \ref{unrel_machine_ptas}.}}
    Let $F$ be the resulting branching tree, and let $v$ be an arbitrary node different from the one at which the algorithm terminates. Let $LUB'$ and $LLB'$ denote the global upper and lower bounds at the time of processing $v$. By definition, we have $LUB' \le UB(v)$; and by the best-first selection strategy, $LLB'=LB(v)$. The algorithm did not stop after processing $v$; hence
    \[
    1+\varepsilon < \frac{LUB'}{LLB'}\le \frac{UB(v)}{LB(v)}.
    \]
    Let $j^*$ be the fixed job at $v$; i.e. the job among the (at most) $m$ fractional jobs of the vertex whose shortest processing time is maximal. For each job $j'$, let $p'_{j'} = \min\{p_{j',1}, \ldots p_{j', m}\}$ denote the shortest one out of all the $m$ processing times. Suppose that job $j^*$ is the $k$-th according to the decreasing order of $p'_{j'}$-s, and the first $k$ jobs in this order are $j_1, j_2, \ldots, j_{k-1}, j^*$. It is evident that rounding up the fractional jobs cannot increase the makespan by more than $m\cdot p'_{j^*}$, so $UB(v)\le LB(v)+m\cdot p'_{j^*}$.

    Observe that $LB(v) \ge \frac{p'_1 + \ldots + p'_n}{m}$, since the latter is a lower bound on the global fractional optimum, whereas the first is the value of some feasible (fractional) solution to a subproblem. Moreover,
    \[
    p'_1 + \ldots + p'_n \ge p'_{j_1} + \ldots + p'_{j_{k-1}} + p'_{j^*} \ge k \cdot p'_{j^*}.
    \]
    From these, we have that
    \begin{equation}
        1+\varepsilon < \frac{LUB'}{LLB'}\le \frac{UB(v)}{LB(v)} \le 1+\frac{m\cdot p'_{j^*}}{LB(v)} \le1+ \frac{m^2\cdot p'_{j^*}}{k\cdot p'_{j^*}}= 1 + \frac{m^2}{k},
    \end{equation}
    and $k \le \lfloor \frac{m^2}{\varepsilon}\rfloor$. On the path from the root of $F$ to $v$, one job cannot be fixed more than once; so we have that when the depth of $v$ exceeds $\lfloor\frac{m^2}{\varepsilon} \rfloor$, at least one fixed job $j'$ will violate $j' \le \lfloor\frac{m^2}{\varepsilon}\rfloor$. Therefore, the depth of $F$ is at most $\lfloor\frac{m^2}{\varepsilon}\rfloor$ (disregarding the terminating node), and the number of nodes in $F$ is at most $m^{\lfloor\frac{m^2}{\varepsilon}\rfloor}$. Since processing a node takes time that is polynomial in $n$, the overall running time is again polynomial. \hfill \Halmos
\endproof

\section{A B\&B FPTAS for the Uniform Machine Scheduling Problem}\label{sec:js:fptas}

In the \emph{identical machine scheduling} setup, every job takes the same amount of time to complete on each of the machines, and so job $j$ is associated with a single processing time $p_j$. The \emph{uniform machine scheduling} setup extends this by associating a speed $s_i$ with each machine $i \in [m]$, and thus rendering the processing time of job $j$ on machine $i$ to be $p_{j, i} = \frac{p_j}{s_i}$. In this paper, we assume that the vector of processing times $\bm{p}$ and the vector of speeds $\bm{s}$ are such that $p_{j,i} \in \mathbb{N}, \forall j \in [n], \forall i \in [m]$. The problem is frequently denoted by $Qm||C_{max}$ when $m$ is a constant (following \cite{machine_scheduling_review}); it is weakly $\textsf{NP}$-hard and, being a special case of $Rm||C_{max}$, it admits an FPTAS.

In this section, we enhance $A^{\text{unrel}}_{\varepsilon}$ for the special case $Qm||C_{max}$ by exploiting a simple observation. During the course of the algorithm, certain repetitive patterns can be identified based on the jobs that are fixed at a given node. Situations may arise where two or more nodes encode sub-problems that could be classified as similar; provided that the fixed job-machine pairings yield schedules that are sufficiently close to each other (the coordinate-wise distance of schedules required to declare them similar is based on the error tolerance factor $\varepsilon$). When aiming for an approximate solution, the processing of these similar nodes can be delayed indefinitely. When the modified B\&B runs its full course, the returned schedule is either optimal or it is within a range of $(1+\varepsilon)$ of the real optimal solution that we put on hold due to similarity. See \cite{Margot02} for a similar idea applied to a branch-and-cut algorithm.

As we will see, this modified scheme $A^{\text{sim-prof}}_{\varepsilon}$ (which, apart from the enhanced node selection rule, is equivalent with $A^{\text{unrel}}_{\varepsilon}$) reduces the search space by such a great factor that the running time will be polynomial in $1/\varepsilon$ as well.

\fptas

Apart from its increased efficiency, the scheme offers the additional advantage of adaptability to changing requirements and time constraints. If the algorithm completes before its designated time limit or a higher-quality solution is needed, it can resume processing the delayed nodes, further refining the search space using a more precise similarity measure with a smaller $\varepsilon$.

% Uniform machine scheduling FPTAS
\subsection{Proof of Theorem \ref{fptas}}\label{sec:js_fptas_proof}

Introducing the similarity of profiles helps us narrow down the search space significantly, but in exchange, it causes a slight technical difficulty when analysing our algorithm. More precisely, if we discard nodes from the tree because of similarity, we can no longer guarantee that the global lower bound $LLB$ maintained by the algorithm is a true lower bound on the integer optimum, since the active nodes no longer correspond to a complete enumeration of potential solutions. Therefore, when we reach the stopping criterion $\frac{LUB}{LLB}\le 1+\epsilon$, we might not have a ``real'' approximation guarantee. Luckily, since discarded profiles are very close to some others remaining, we can guarantee that this restriction does not change the global lower bound by too much, and $LLB$ will be at most $\varepsilon$-far from a``real'' lower bound.

Let us fix $1>\varepsilon > 0$, and consider an input $\bm{P} \in \mathbb{N}^{n \times m}$ to the uniform machine scheduling problem with $m$ fixed machines and $n$ jobs. For the sake of a simpler analysis, let us divide each processing time with the global fractional optimum of the ``binary search'' LP-relaxation \eqref{partial_machine_scheduling_lp}. This step does not affect the optimal integer assignment, and its new makespan $T_{opt}$ satisfies that 
\begin{equation}\label{const_norm}
    1 \le T_{opt} \le 2
\end{equation}

These constant bounds help us reduce the space of possible numerical values that emerge at any point of the algorithm. In particular, since we aim for an approximation guarantee of $(1+\varepsilon)^2$, any solution with such makespan guarantee has completion times (and therefore processing times) at most $2(1+\varepsilon)^2$ due to \eqref{const_norm}. Therefore, any time we encounter a sub-problem with completion times or processing times outside the interval $[0,2(1+\varepsilon)^2]$, we can safely disregard it as for sure it cannot correspond to a solution with the desired approximation guarantee.

In this section, we consider the problem $Qm||C_{\text{max}}$, and we will enhance the previous algorithm $A^{\text{unrel}}_{\varepsilon}$ by exploiting common input-modifying techniques that are frequently used for obtaining fully polynomial-time approximation schemes. In general, these techniques consist of applying a series of transformations on the input instance, while keeping the objective value sufficiently close to the optimum. Most often, the modifications are a mixture of rounding down processing times to the nearest value of some finite sequence, and grouping small jobs together to reduce the number of jobs in the input. The rounding of processing times allows for greater control on feasible solutions and gives way to create \emph{profiles} that collect equivalent schedules. On the other hand, grouping small jobs together results in a smaller instance for which even a complete enumeration of schedules would be feasible. If the parameters of the modification are chosen carefully, the combination of these two steps guarantees an algorithm that runs in polynomial time in both $n$ and $\frac{1}{\varepsilon}$. For a detailed background, we refer to \cite{grouping_1} and \cite{grouping_2}.

In our current investigation, we will solely rely on the first type of modification: rounding down processing times to the nearest value of some sequence. But, instead of directly modifying the input, we will design a scheme that allows us to obtain the same effect without touching the input first, thus guaranteeing a more ``natural'' approach. Our method builds on the concept of  
\emph{profiles}: for a (partial) assignment $S$ of some jobs, the profile of $S$ is the $m$-tuple $\Pi(S)=(C_S(1), \ldots, C_S(m))$ of completion times. We call two profiles $\Pi(S_1)$ and $\Pi(S_2)$ $\varepsilon$-similar if $|\Pi(S_1)_i - \Pi(S_2)_i| \le \frac{\varepsilon}{n}, \,\,\,\forall i \in [m]$. We will write $S_1 \sim_{\varepsilon} S_2$ or $\Pi(S_1) \sim_{\varepsilon} \Pi(S_2)$ in notation. The key observation is the following: consider the $m$-dimensional cube $\left[0,2(1+\varepsilon)^2\right]^m$ (the ground set of completion times for which the approximation guarantee is not broken immediately), and consider its partition given by the set of points $\left[0,\frac{\varepsilon}{n}, \frac{2\varepsilon}{n}, \ldots, \frac{2n(1+\varepsilon)^2}{\varepsilon}\cdot \frac{\varepsilon}{n}\right]^m$. If we have two profiles falling into the same partition class, then they are $\varepsilon$-similar. Conversely, any set of profiles with makespan at most $2(1+\varepsilon)^2$ without two $\varepsilon$-similar profiles has at most $\left(1+\frac{2n(1+\varepsilon)^2}{\varepsilon}\right)^m \le \left(\frac{3n(1+\varepsilon)^2}{\varepsilon}\right)^m$ elements (here we implicitly use the fact that $\epsilon < 1$; otherwise the root node contains a $2\le(1+\varepsilon)^2$-approximate solution).

For a node $v$ in the branching tree, its profile $\Pi(v)$ is defined as the profile of the partial schedule $S$ made up of the jobs fixed at $v$. In other words, the profile is simply the overhead vector associated with the integer programming formulation corresponding to the sub-problem at $v$: $\Pi(v) = \bm{t}_v$. The concept of $\varepsilon$-similar profiles allows us to consider nodes of the branching tree ``equivalent'' if they have $\varepsilon$-similar profiles, and they have the same set of jobs fixed so far. Note that we need \emph{both} the same profiles and the same fixed jobs in order to declare two nodes equivalent, as shown by the following identical instance with $2$ machines given by processing time $(n,n,1,\ldots, 1)$ with $n$-many $1$-jobs. We can have two partial assignments with the same profile $(n,n)$, but one of them is made up of one $n$-job and $n$ $1$-jobs while the other is made up of two $n$-jobs. It is not justified to deem them equivalent as the best extension of the first profile has a makespan of $2n$, while the latter can be extended to a schedule with makespan $\frac{3}{2}n$.

However, the following Lemma gives a natural way to ensure that all nodes at a given level have the same fixed jobs in the uniform setup.

\begin{lmm}\label{longest_frac}
   Let $({\bm{P}}, \bm{t}) \in (\mathbb{N}^{(n\times m)}, \mathbb{N}^m)$ be an instance of the uniform machine scheduling problem with $n$ jobs where $\bm{P}$ is given by processing times $\bm{p} \in \mathbb{N}^n$ and machine speeds $\bm{s}\in \mathbb{N}^m$, and let $n$ be the job whose processing time is maximal. Let $T'$ denote the smallest integer $T$ for which $PARTIAL-LP-MS_m (\bm{P}, \bm{t}, T)$ is feasible. If there exists a schedule $\bm{x}^*$ with at least one fractional job such that $\bm{x^*}$ is a vertex of $PARTIAL-LP-MS_m (\bm{P}, \bm{t}, T')$, then there exists a schedule $\bm{\hat{x}}$ in which job $n$ is fractional and $\bm{\hat{x}}$ is a vertex of the same polyhedron.
\end{lmm}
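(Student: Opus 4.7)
The plan is to construct $\bm{\hat{x}}$ by a single simplex pivot from $\bm{x}^*$ that swaps mass between job $n$ and some conveniently chosen fractional job $j'$. If $n$ is already fractional at $\bm{x}^*$ we are done with $\bm{\hat{x}} = \bm{x}^*$. Otherwise $\bm{x}^*$ assigns $n$ integrally to a unique machine $i^*$, so $x^*_{n, i^*} = 1$; by hypothesis some $j' \ne n$ is fractional at $\bm{x}^*$, and we pick a machine $i \ne i^*$ with $x^*_{j', i} > 0$. The structural fact that makes this pivot work is the uniformity of processing times: $p_{j, i'} = p_j/s_{i'}$ makes the ratio $p_{n, i'}/p_{j', i'} = p_n/p_{j'}$ independent of the machine index $i'$.

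This independence enables a ``mass-conserving'' direction $d$ that simultaneously decreases $x_{n, i^*}$ and increases $x_{n, i}$ by $\epsilon$, while decreasing $x_{j', i}$ and increasing $x_{j', i^*}$ by $\delta := (p_n/p_{j'})\,\epsilon$. A direct check shows that $d$ preserves every assignment equality and, crucially, keeps the total load on both $i^*$ and $i$ unchanged, so every capacity constraint of \eqref{partial_machine_scheduling_lp} is preserved exactly, tight or not. The next step is to follow the edge of the polyhedron along $d$, increasing $\epsilon$ until the first previously-slack non-negativity becomes tight. Since $p_n \ge p_{j'}$ yields $\delta \ge \epsilon$, the variable $x_{j', i}$ reaches $0$ strictly before $x_{n, i^*}$ could; stopping at $\epsilon^* = (p_{j'}/p_n)\,x^*_{j', i}$ produces $\bm{\hat{x}}$ with $\hat{x}_{n, i^*}, \hat{x}_{n, i} \in (0, 1)$, so job $n$ is fractional. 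That $\bm{\hat{x}}$ is again a vertex then follows by a standard simplex-pivot argument: the tight constraints at $\bm{x}^*$ span the ambient space; dropping the tight non-negativity $x_{n, i} \ge 0$ (which $d$ actively moves away from) lowers the rank by exactly one because $d$ spans the resulting kernel, and adjoining the newly tight $x_{j', i} = 0$ restores full rank because $d_{j', i} \ne 0$.

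The main obstacle lies in guaranteeing a clean choice of $(j', i)$. First, all four variables in the support of $d$ must actually exist in the restricted polyhedron, i.e., $(n, i), (n, i^*), (j', i), (j', i^*) \in S_{T'}$. The membership $(j', i^*) \in S_{T'}$ is automatic because $p_{j', i^*} \le p_{n, i^*} \le T'$ follows from $p_{j'} \le p_n$; but $(n, i) \in S_{T'}$ is a genuine condition, since the longest job is the hardest to place at the threshold $T'$. Second, for the pivot to drop the rank by exactly one, one wants $x^*_{j', i^*} > 0$ so that the constraint $x_{j', i^*} = 0$ is not already in the active set of $\bm{x}^*$; otherwise $d$ would strip two tight non-negativities at once and could fail to be an edge direction. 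Both of these issues are addressed by exploiting the minimality of $T'$ (infeasibility at $T' - 1$ forces some extra structure on where mass lives) and by the freedom to swap $j'$ for a different fractional job: the hypothesis supplies at least one such job, and Lemma \ref{rounding} caps their number at $m$, so the selection is combinatorially constrained but tractable. Making this selection argument precise is the heart of the proof.
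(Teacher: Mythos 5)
Your construction — a mass-conserving exchange that sends $\epsilon$ of job $n$ from $i^*$ to $i$ while pulling $\delta=(p_n/p_{j'})\epsilon$ of a fractional job $j'$ back from $i$ to $i^*$, run until $x_{j',i}$ hits zero — is exactly the exchange the paper uses (with $i^*=1$, $i=2$, $\epsilon=\epsilon_1$, $\delta=\epsilon_2$). Where you diverge, and where the gap lies, is in certifying that the resulting point is a vertex.

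You attempt a ``standard simplex-pivot'' argument: drop the tight constraint $x_{n,i}\ge 0$, claim the rank falls by exactly one with $d$ spanning the kernel, then regain rank from $x_{j',i}=0$. You yourself flag the place this breaks: if $x^*_{j',i^*}=0$, the direction $d$ leaves \emph{two} tight non-negativities at once ($x_{n,i}\ge 0$ and $x_{j',i^*}\ge 0$), so $d$ need not be an edge direction and the ``rank drops by one'' claim is unsubstantiated. You defer this (``making this selection argument precise is the heart of the proof''), but that deferred step is exactly what a proof of the lemma must supply, so as written this is a genuine gap rather than a detail. The paper avoids the fragility entirely by proving vertex-ness through Lemma~\ref{vertex_char}: since completion times are unchanged, condition (ii) is inherited, and it remains only to show $G(\hat{\bm{x}})$ is acyclic. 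The paper's two-case analysis (is there already a fractional job on the machine carrying $n$, or not?) does precisely this by a short graph argument with symmetric differences; crucially it handles the $x^*_{j,1}=0$ case — the very case that kills the naive pivot argument — because adding the new edge $w_j u_1$ to a graph where $u_1$ previously had degree zero cannot create a cycle unless one already existed. That is the missing idea in your write-up: the right vertex criterion for this polytope (in the uniform case) is combinatorial, not linear-algebraic, and once you have it the case distinction becomes clean.

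On your other flagged issue, $(n,i)\in S_{T'}$: you are right that this is a real constraint that must hold for $\hat{x}_{n,i}$ to even be a variable, and you do not resolve it. You should be aware that the paper's proof is also terse here (it simply sets $\hat{x}_{n,2}=\epsilon_1>0$ after relabeling the second machine as whichever one carries the other fractional piece of $j$), so spelling out why $(n,2)\in S_{T'}$ in both of the paper's cases would be a legitimate thing to worry about when reading the paper; but your own proposal cannot rely on ``minimality of $T'$ forces structure'' without actually deriving that structure. In short: same exchange, but you need Lemma~\ref{vertex_char} rather than a pivot-count argument to close the vertex-ness step, and you need to carry out, not merely announce, the selection of $(j',i)$.
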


In the proof of Lemma \ref{longest_frac}, we will exploit useful properties of vertices of the polytope described in \eqref{partial_machine_scheduling_lp}. Namely, in the uniform machine scheduling model, we can extend the result of Lemma \ref{rounding} and characterize vertices of the polyhedra. The basic idea behind Lemma \ref{rounding} is the following: for a feasible solution $\bm{x}$, they construct a bipartite auxiliary graph $G(\bm{x})$ with the $2$ classes corresponding to the $m$ machines and the at most $m$ fractional jobs, and add an edge between $ i \in [m]$ and $j \in [n]$ if $x_{j, i} >0$ and is fractional. They conclude that $G(\bm{x})$ must be a \emph{pseudo-forest} when $\bm{x}$ is a vertex, and use this fact to construct a matching between machines and fractional jobs. 

We can strengthen their observation in the uniform model, and use it to our advantage for characterizing vertices of the corresponding polyhedra. Let $(\bm{p}, \bm{s}) \in (\mathbb{N}^{n}, \mathbb{N}^{m})$ denote an input to the uniform machine scheduling problem with $\bm{p}$ being the vector of processing times, and $\bm{s}$ being the vector of machine speeds. The corresponding input matrix is $\bm{P} = (p_{j,i})_{i,j = 1, 1} ^{m, n}$ with $p_{j,i} = \frac{p_j}{s_i}$. Recall that $\bm{P} \in \mathbb{N}^{n \times m}$ is assumed, although it is not explicitly used in the proof. Let us recall that a machine's completion time according to some schedule $S$ is denoted by $C_S'(i)$ when taking into account overhead $t_i$ as well. With a little abuse of notation, a fractional solution $\bm{x}$ of the linear program can be interpreted as a fractional schedule, where the completion time at machine $i$ is denoted by $C_{\bm{x}}'(i)$.

\begin{lmm}\label{vertex_char}
    Let $(\bm{P}, \bm{t})$ be an input to the uniform machine scheduling problem, and let $T'$ denote the smallest integer $T$ for which \eqref{partial_machine_scheduling_lp} is feasible; let $\bm{x}$ be a feasible solution of $PARTIAL-LP-MS_m (\bm{P}, \bm{t}, T')$. Then $\bm{x}$ is a vertex if and only if these two conditions hold: (i) $G(\bm{x})$ is a forest, and (ii) each connected component of $G(\bm{x})$ contains at most one machine-node $i$ for which $C_{\bm{x}}'(i)<T'$.
\end{lmm}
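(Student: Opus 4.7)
The plan is to reduce both directions of the characterization to a standard linear algebra argument on a bipartite transportation system. Since the $x_{j,i}$ that are pinned at $0$ or $1$ are trivially fixed by active constraints, $\bm{x}$ is a vertex iff the values $x_{j,i}$ indexed by edges of $G(\bm{x})$ are uniquely determined by (a) the job-sum equalities for each fractional job $j$ and (b) the tight machine-load constraints for those $i$ with $C_{\bm{x}}'(i) = T'$. The key trick, specific to the uniform setting $p_{j,i} = p_j/s_i$, is the substitution $y_{j,i} := p_j \cdot x_{j,i}$. Under it, the job-sum constraint becomes $\sum_i y_{j,i} = c_j$ and the tight machine-load constraint becomes $\sum_j y_{j,i} = d_i$ for suitable residual constants $c_j, d_i$ (obtained by peeling off the integrally assigned part). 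The characterization of vertices of the original polyhedron thereby reduces to unique solvability of a classical transportation system on $G(\bm{x})$.

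For the forward direction I would argue the contrapositive and exhibit an explicit perturbation in $y$-space. If $G(\bm{x})$ contains a cycle, assigning $\pm\eta$ alternately along the edges of the cycle preserves every row sum (each job-node on the cycle sees one $+\eta$ and one $-\eta$ edge) and every column sum (similarly for each machine-node). If a component of $G(\bm{x})$ contains two machine-nodes $i_0, i_k$ with $C_{\bm{x}}'(i_0), C_{\bm{x}}'(i_k) < T'$, take any path $P$ between them in the component and alternate $\pm\eta$ along $P$; internal vertices see net change zero, while $i_0$ and $i_k$ absorb $+\eta$ and $-\eta$ in their slack load constraints, which is fine for small $\eta$. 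Translating back via $x_{j,i} = y_{j,i}/p_j$ shows $\bm{x} \pm \eta\,\bm{d}$ stays feasible for $\eta > 0$ small enough, so $\bm{x}$ is not a vertex.

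For the backward direction I would analyze each component of $G(\bm{x})$ separately, assuming $G(\bm{x})$ is a forest with at most one non-tight machine per component. In a tree component with $v$ vertices ($n_J$ fractional jobs and $n_M$ machines) and $v-1$ edges, condition (ii) guarantees at least $n_J + (n_M - 1) = v - 1$ active row/column sum constraints. On a connected bipartite transportation system there is exactly one linear dependence among its $v$ row/column sums (both sides total to the same sum of entries), so any $v - 1$ of them are linearly independent; this is precisely the count we have, so the $y_{j,i}$ on the component are determined uniquely, and hence so are the $x_{j,i}$. Aggregating across components, $\bm{x}$ is the unique feasible solution of the active constraint system.

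The main obstacle I foresee is the bookkeeping in the forward direction: one must verify that a single $\eta > 0$ can be chosen so that the perturbation in $x$-space simultaneously respects all strict inequalities, namely non-negativity on the edges of $G(\bm{x})$ and slack at the chosen non-tight machines. The uniform assumption $p_{j,i} = p_j/s_i$ is crucial, since without it the substitution $y_{j,i} = p_j\, x_{j,i}$ would not simultaneously homogenize the job and machine constraints into pure row and column sums, and the clean transportation-system reduction underpinning both directions would fail.
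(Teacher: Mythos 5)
Your proof is correct, and it takes a genuinely different route from the paper's. The paper works directly in $\bm{x}$-space: for the forward direction it constructs a perturbation along a cycle (or path) with the explicit recursion $\epsilon_{i+1} = \epsilon_i \cdot p_i/p_{i+1}$, verifies that it preserves the tight machine loads, and rescales to stay inside the box; for the backward direction it shows that the support of any feasibility-preserving perturbation spans a forest, finds two degree-one machine nodes in one of its components, and derives a contradiction from one of them being tight. Your change of variables $y_{j,i} = p_j\, x_{j,i}$ absorbs the coefficient ratios into the substitution once and for all, turning both the job equalities and the tight machine loads into pure row and column sums of a transportation system on $G(\bm{x})$. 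This makes the forward perturbation a bare alternating $\pm\eta$ flow (no ratio bookkeeping needed) and, more strikingly, lets you replace the paper's combinatorial degree argument in the backward direction with a one-line rank count: the vertex--edge incidence matrix of a connected bipartite tree on $v$ vertices has rank $v-1$, the unique dependence is (sum of job rows) $-$ (sum of machine rows) $= 0$ which has all coefficients nonzero, so dropping the single non-tight machine row still leaves $v-1$ independent constraints on $v-1$ edge variables. The two proofs exploit the uniform assumption $p_{j,i} = p_j/s_i$ in the same place (without it neither the ratio telescope nor the transportation reduction works), but your version lays bare that the lemma is really a statement about vertices of a transportation polytope, which the paper's formulation keeps implicit. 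One small point worth spelling out if you write this up: in reducing vertexhood to the unique solvability of the system on edge variables, you should note that the job equality for an integral job and the tight-load equality for a machine incident only to integral assignments each reduce, after fixing the $\{0,1\}$ coordinates, to trivial constraints and can be dropped, so the only relevant active constraints are precisely those you list.
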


Since the proof of these lemmas are rather tedious, we postpone presenting them to Appendix \ref{app:proofs}.

\bigskip

With this, we are ready to define our final enhanced algorithm $A^{\text{sim-prof}}_{\varepsilon}$. It takes as input an instance of the uniform machine scheduling problem $(\bm{P}, \bm{0})$ where $\bm{P}$ is given by $(\bm{p}, \bm{s}) \in \mathbb{N}^{n+m}$. It rearranges the jobs such that $p_1 \ge \ldots \ge p_n$, then creates an equivalent instance $\bm{P}'$ by dividing $\bm{P}$ with the global fractional binary search-based optimum (at this point, the optimal makespan satisfies $1\le T_{opt}\le 2$). Then it proceeds as a branch-and-bound algorithm with the following specifications: when processing a node $v$, it first finds a vertex of the corresponding relaxation of the sub-problem \eqref{partial_machine_scheduling_lp} with the smallest $T$ for which the program is feasible. If the vertex is integer, the algorithm stops; as it has found a schedule whose makespan coincides with the global lower bound $LLB$ (due to the best-first selection criterion). If the vertex is fractional and the longest unfixed job is not fractional, then it follows Lemma \ref{longest_frac} to arrive at another vertex with the same makespan in which the longest unfixed job is fractional. Then, it rounds up this vertex to find an integer solution, according to an arbitrary matching between machines and fractional jobs. The pivot element at node $v$ will be the longest fractional job, which by now coincides with the longest unfixed job. $m$ new branches are created, labelled by the machine on which the longest unfixed job is fixed at the next level. For each new node $u$, the algorithm checks whether it has already found a schedule with a makespan better than $LB(u)$, in which case $u$ is discarded. Next, $\Pi(u) = \bm{t}_u$ is compared with all previous profiles at the same depth. If $\max\{\Pi(u)_i: i \in [m]\} > 2(1+\varepsilon)^2$, or there already exists a node at the same depth whose profile is $\varepsilon$-similar to $\Pi(u)$, $u$ is discarded. The remaining of the $m$ new nodes are added to the list of active nodes, the list of profiles is appended with the new ones, and the next node to process is selected according to the best-first tree traversal rule.

The process terminates when the ratio between the makespan of the best discovered schedule and the lowest lower bound satisfies $\frac{LUB}{LLB} \le 1+ \varepsilon$, at which point it returns the best schedule found so far.

\begin{proof}{\textbf{Proof of Theorem \ref{fptas}.}}
    The bound on the cardinality of the branching tree is an immediate consequence of the fact that a given level of the tree cannot contain two nodes whose profiles are $\varepsilon$-similar, and we have seen in the beginning of the section that the cardinality of such a set is at most $\left( \frac{3n(1+\varepsilon)^2}{\varepsilon}\right)^m$.

    What remains to be shown is the approximation property of the algorithm. The main challenge lies in the fact that we must coordinate two separate sources of error that yield an approximation guarantee of the multiplied error terms. First, we lose the ``usual'' factor of $(1+\varepsilon)$ as the algorithm stops as soon as it finds a solution within a $(1+\varepsilon)$-radius of $LLB$. On the other hand, as we have hinted before, $LLB$ might not even be a ``true'' lower bound on the integer optimum, since we can no longer guarantee that the set of active nodes cover all possible feasible solutions to the problem. Hence our effort in the following part will be channelled into proving that the B\&B tree restricted to $\varepsilon$-similarity maintains a parameter $LLB$ that is not too far from a ``real'' lower bound.
    
    For that purpose, consider the following three trees: let $F^{\text{sim-prof}}_{\varepsilon}$ denote the branching tree at termination of $A^{\text{sim-prof}}_{\varepsilon}$; let $F^{\text{sim-prof}}$ denote the final branching tree of the variant of $A^{\text{sim-prof}}_{\varepsilon}$ that does not have the early stopping criterion $\frac{LUB}{LLB}\le 1+\varepsilon$; and let $F^{full}$ be the tree which we obtain by running $A^{\text{sim-prof}}_{\varepsilon=0}$ on the same input (in other words, $F^{full}$ is the resulting B\&B tree of an algorithm that considers neither similarity nor early stopping criteria, and finds the optimal solution). It clearly holds that $F^{\text{sim-prof}}_{\varepsilon} \subseteq F^{\text{sim-prof}} \subseteq F^{full}$. Let $F^{\text{sim-prof}}_i$ and $F^{full}_i$ denote the $i$-th levels of the corresponding trees $F^{\text{sim-prof}}$ and $F^{full}$. The core of our proof are the following two lemmas. With a little abuse of definitions, we write $u \sim_{\varepsilon} v$ if $\Pi(u) \sim_{\varepsilon} \Pi(v)$.

    \begin{lmm}\label{i_eps}
        Let $0 < i \le n$ be an arbitrary integer, and let $v\in F^{full}_i$. Then there exists $u \in F^{\text{sim-prof}}_i$ such that $u \sim_{i\varepsilon} v$.
    \end{lmm}
    
    \proof{\textbf{Proof.}}
        We prove by induction on $i$. The case $i=1$ trivially holds, as the root node is included in both $F^{full}$ and $F^{\text{sim-prof}}$. Let $v^{parent}$ be the parent of $v$, and assume that $v$ is the $k$-th child of $v^{parent}$ from the left (equivalently, the algorithm fixed job $i$ on machine $k$ in $v^{parent}$). By induction, there exists a node $w \in F^{\text{sim-prof}}_{i-1}$ such that $v^{parent} \sim_{(i-1)\varepsilon} w$. Let $u$ be the $k$-th child of $w$ in $F^{full}_i$ (the one we get by fixing job $i$ on machine $k$ at node $w$). Notice that since $v^{parent} \sim_{(i-1)\varepsilon} w$, and the coordinate-wise difference between $\Pi(v^{parent})$ and $\Pi(v)$ is the same as between $\Pi(w)$ and $\Pi(u)$; $v \sim_{(i-1)\varepsilon} u$ holds as well. If $u\in F^{\text{sim-prof}}_i$, then we are ready. If not, then there exists $u' \in F^{\text{sim-prof}}_i$ such that $u \sim_{\varepsilon} u'$, so $u' \sim_{i\varepsilon} v$ by the triangle-inequality. \hfill \Halmos
    \endproof

    \begin{lmm}\label{similar_lb}
        Let $u, v \in F^{full}_i$ such that $u \sim_{i\varepsilon} v$. Then $|LB(u)-LB(v)| \le \frac{i\varepsilon}{n}$, and $|UB(u)-UB(v)| \le \frac{i\varepsilon}{n}$.
    \end{lmm}

    \proof{\textbf{Proof.}}
        Let $x^*(u)$ and $x'(u)$ denote the optimal fractional schedule corresponding to the sub-problem at node $u$, and its rounded-up integer complete solution, respectively. Similarly define $x^*(v)$ and $x'(v)$. Let us remind that for an arbitrary integer/fractional schedule $S$, $C_S(i)$ denotes the completion time on machine $i$ according to $S$.

        Suppose that the optimal fractional extension of $u$ increases the completion time of machine $i$ by some value $X_i$: $C_{x^*(u)}(i) = C_u (i) + X_i$. Since $u$ and $v$ consist of the same jobs, extending $v$ by the same fractional job-machine pairings yields a fractional profile whose completion time on machine $i$ is $C_v(i) + X_i$. Assume that the makespan of this fractional schedule is given by the first machine, and so is equal to $C_v (1) + X_1$. Using the fact that $u$ and $v$ are $i\varepsilon$-similar, we have that
        \[
        LB(v) \le C_v(1) + X_1 \le C_u(1) + \frac{i\varepsilon}{n} + X_1 = C_{x^*(u)}(1) + \frac{i\varepsilon}{n} \le LB(u) + \frac{i\varepsilon}{n}. 
        \]
        The other three inequalities can be proven in a similar manner. \hfill \Halmos
    \endproof

    \begin{figure}[h]
        \centering
        \includegraphics[width=\linewidth]{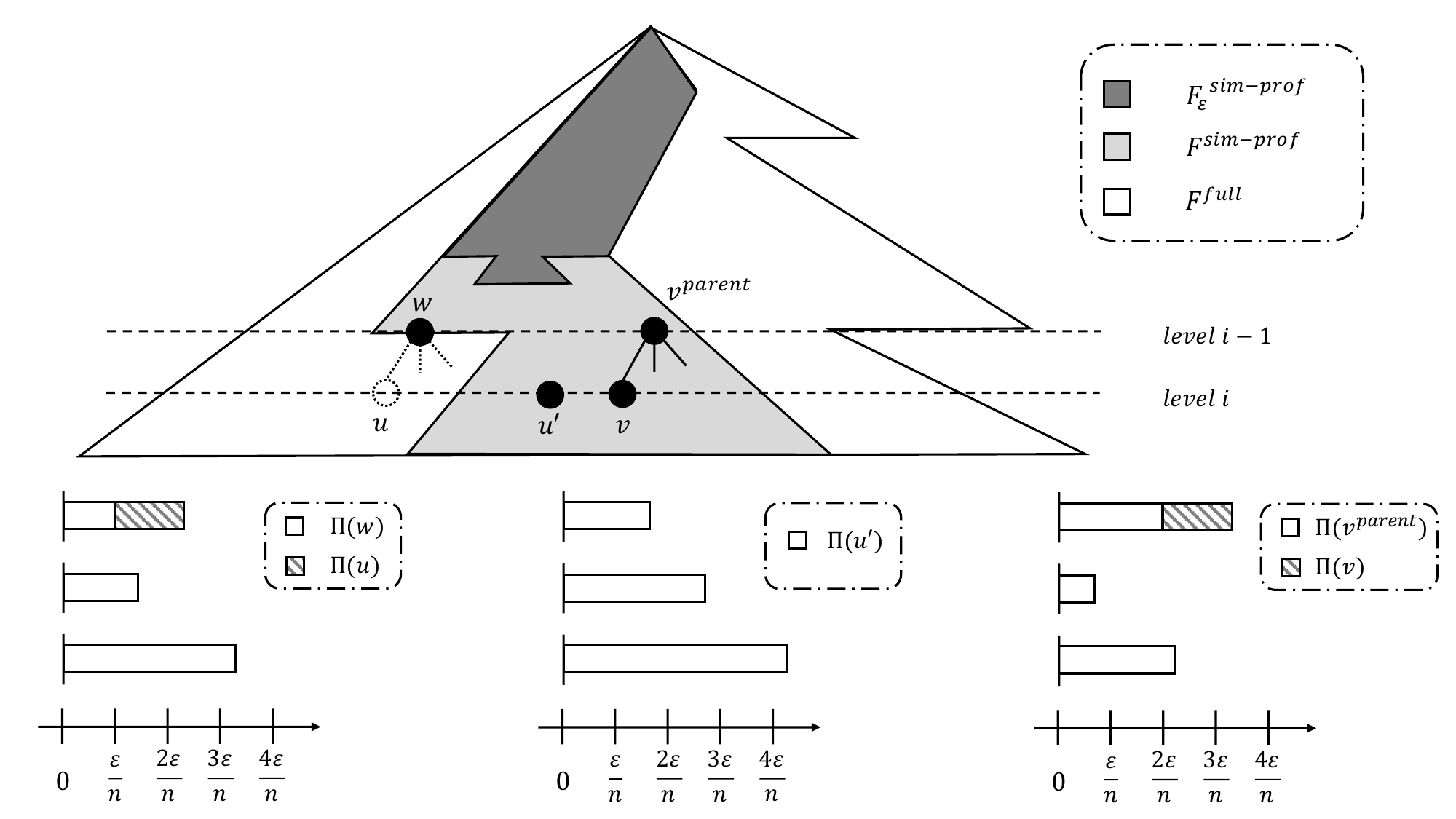}
        \caption{The relationship between the three trees. For $i=2$, $v \sim_{(i-1)\varepsilon} u$, but $u \not \in F^{\text{sim-prof}}$. $u' \sim_{\varepsilon} u$ and $u' \in F^{\text{sim-prof}}$, so $v \sim_{i\varepsilon} u'$.}
        \label{trees}
    \end{figure}
    
    We give an auxiliary sketch in Figure \ref{trees}. With the help of these two lemmas, we can finalize the proof of the theorem. Let $LLB(F)$ and $LUB(F)$ denote the global lower and upper bounds of an arbitrary branching tree $F$, respectively. The relationship between $F^{\text{sim-prof}}_{\varepsilon}$ and $F^{\text{sim-prof}}$ guarantees that $LLB(F^{\text{sim-prof}}_{\varepsilon}) \le LLB(F^{\text{sim-prof}})$. Lemmas \ref{i_eps} and \ref{similar_lb} guarantee that $LLB(F^{\text{sim-prof}})\le LLB(F^{full}) + \varepsilon$, since if $v^{opt}$ denotes the leaf in $F^{full}$ with the lowest lower bound, then the lemmas guarantee the existence of a leaf node $u$ in $F^{\text{sim-prof}}$ such that $LB(u)\le LB(v^{opt})+ n \cdot \frac{\varepsilon}{n} = LB(v^{opt}) + \varepsilon$. Combining these with the stopping criterion of $A^{\text{sim-prof}}_{\varepsilon
    }$, we have that
    \[
    LUB(F^{\text{sim-prof}}_{\varepsilon}) \le (1+\varepsilon)\cdot LLB(F^{\text{sim-prof}}_{\varepsilon}) \le (1+\varepsilon)\cdot LLB(F^{\text{sim-prof}})\le 
    \]
    \[
    \le (1+\varepsilon)\cdot (LLB(F^{full})+ \varepsilon) \le (1+\varepsilon)^2 \cdot LLB(F^{full}),
    \]
    where in the last step we used the fact that $LLB(F^{full}) \ge LB(root) = 1$.
    As $LLB(F^{full})$ is a true lower bound on $T_{opt}$ (as a matter of fact, it coincides with $T_{opt}$), the proof is concluded. \hfill \Halmos
\end{proof}

In what follows we show how the algorithm $A^{\text{sim-prof}}_{\varepsilon}$ generalizes a dynamic programming approach for the machine scheduling problem. The latter consists of constructing a matrix $\bm{M}$, where the $n$ rows are labelled by the $n$ jobs in some fixed order. Column $i$ pertains to a representative profile $\Pi(i)$ of the partition classes of the cube $[0,2(1+\varepsilon)^2]^m$ given by points $[0, \frac{\varepsilon}{n}, \ldots, \frac{2n(1+\varepsilon)^2}{\varepsilon}\cdot \frac{\varepsilon}{n}]^m$. The entry at column $i$ and row $j$ is $1$ if there exists a partial assignment with the first $j$ jobs whose profile is similar to $\Pi(i)$; otherwise, the entry is $0$. The algorithm fills in the entries of the matrix by the best-first principle, then checks all $ 1$-entries of the last row and determines the best makespan of the corresponding profiles. By our above reasoning, the returned schedule will be a $(1+\varepsilon)^2$-approximate solution.

The embedding of the dynamic programming in $A^{\text{sim-prof}}_{\varepsilon}$ can be described as follows: each level of the branching tree $F$ has the same job fixed at every node, therefore level $j$ contains nodes where the fixed jobs are $1,\ldots, j$. Moreover, we only prune a node when either its profile is similar to another one already found (implying that at most one profile is considered from each partition class), or its lower bound is worse than an already found integer solution. Therefore, nodes at depth $j$ in $F$ have a one-to-one correspondence with cells of the $j$-th row of $M$ whose entry is $1$, except for some profiles that were discarded for having a too-high lower bound. In other words, $A^{\text{sim-prof}}_{\varepsilon}$ can be seen as the dynamic programming algorithm embedded in the branch-and-bound framework, where $\bm{M}$ is traversed according to the best-first logic, and some entries are disregarded when even the best possible extension of their profile is worse than an already found feasible solution. In the worst case, each cell of $\bm{M}$ is visited before a $(1+\varepsilon)^2$-approximate solution is found; but it happens no later than the processing of the last entry, according to Theorem \ref{fptas}.

The embedding becomes even more evident if we replace the best-first node selection rule with BFS. Then, traversing level $j$ of the tree is nothing else but processing the $j$-th row of $\bm{M}$ except for some entries that are stepped over because of their lower bound.

\bigskip

To conclude the section, we point out the infeasibility of repeating our results for the unrelated machine scheduling problem and the multiple knapsack problem. The notion of profiles and the $\frac{\varepsilon}{n}$-partition of their space can be extended without changing anything. The difficulty lies in guaranteeing the highly structured property of the branching tree, in which the same job is fixed at all nodes of a given level. Of course, one can simply hard-code this into the algorithm, but giving a pivot rule that achieves this naturally seems infeasible. In particular, the following example shows that the ``maximal shortest processing time'' selection rule does not have this guarantee: let $m = 3$ and consider the following input with $n=2k+2$ jobs: $p_{1,1} = p_{1,2}=p_{1,3} = 3k+2$, $p_{j,2} = p_{j,3} = 3, \,\,\, j= 2, \ldots, n-1$ and $p_{n,2} = p_{n,3} = 2$. The rest of the processing times are chosen such that $p_{j,1} \le 3k+1, \,\,\, j = 2, \ldots, n$. It is easy to check that in the second iteration, there is no vertex in the polyhedron where the job with the maximal shortest processing time is fractional. This instance also serves as a counterexample for a bunch of other pivot selection strategies, such as ``maximal average completion time'' or ``maximal longest processing time''.

A similar phenomenon takes place in the case of the (multiple) knapsack problem, with the exception that we \emph{know} the infeasibility of having a structure where each node in the same level has the same job fixed. In particular, for the single knapsack problem, the two children of a given node have different pivot elements (provided that they are both feasible).

\begin{lmm}
    Let $(C, \bm{w}, \bm{p})$ denote an input to the single knapsack problem. Assume that the items are such that $\frac{p_1}{w_1} > \ldots > \frac{p_n}{w_n}$, and the pivot element is $j^*$. Let $(C-w_{j^*}, \bm{w}', \bm{p}')$ and $(C, \bm{w}', \bm{p}')$ denote the two subproblems corresponding to including and excluding item $j^*$ from the knapsack, with $\bm{w}' = \bm{w}|_{[n]-j^*}$ and $\bm{p}' = \bm{p}|_{[n]-j^*}$. Assume that both subproblems are feasible, and the corresponding pivot elements are $j_1$ and $j_2$. Then $j_1 < j^* < j_2$.
\end{lmm}

\proof{\textbf{Proof.}}
    Items $1, \ldots, j^* -1$ do not fit inside the knapsack with reduced capacity $C-w_{j^*}$ (because items $1, \ldots, j^*-1, j^*$ did not fit inside the knapsack with original capacity $C$), but they do fit inside the original capacity $C$. Hence, $j_1 < j^* < j_2$. \hfill \Halmos
\endproof

\section{A B\&B PTAS for the Identical Machine Scheduling Problem with a non-constant number of machines}\label{sec:js_ptas_m}

For all previous algorithms, the number of machines, $m$, appeared in the exponent of the running time. In this section, we are going to explore how a refined analysis of $A^{\text{sim-prof}}_{\varepsilon}$ in the identical machines paradigm helps us achieve a time complexity that is polynomial in $m$ as well; thus obtaining a PTAS for the identical machine scheduling problem with an unfixed number of machines (denoted by $P||C_{max}$). The key observation is that when machines are identical, the algorithm does not have to take into account the order of the machines when considering the similarity of profiles. We will slightly adapt the notion of profiles to mirror this observation, along with a collection of other modifications:

\begin{itemize}

    \item Instead of an additive partitioning of the interval $[0,2(1+\varepsilon)^2]$, we opt to use one based on a geometric sequence of partition points of the following form: $\varepsilon \cdot (1+\varepsilon)^h, \,\,\, h = 0, \ldots , \log_{1+\varepsilon} \frac{2(1+\varepsilon)^2}{\varepsilon}$. The reason is that with this partition, the number of classes is constant in both $m$ and $n$; and the incurred error when considering similarity \emph{for all jobs simultaneously} is just a multiplicative factor of $(1+\varepsilon)$, since if all jobs in the profile get multiplied by $(1+\varepsilon)$, then so does the makespan of the entire profile. If we want to get the same error term with an additive partitioning, we would need that the distance between consecutive points must be at most $\frac{\varepsilon}{n}$, but then the number of partition classes ($\ge \frac{2n(1+\varepsilon)^2}{\varepsilon}$) would not be constant in the number of jobs.

    \item We will rely on a distinction between small jobs and big jobs that is relatively common in the machine scheduling literature (\cite{grouping_1}, \cite{JansenM04}, \cite{grouping_2}, \cite{Mastrolilli03}, \cite{Mastrolilli04}, \cite{Mastrolilli06}, \cite{MastrolilliH06}). Jobs with processing time less than $\varepsilon$ will not be considered for determining possible profiles. The reason is that profiles containing jobs with processing times less than $\varepsilon$ will never be considered for branching. By the way the algorithm performs the creation of branches, fixing a job $j$ with $p_j < \varepsilon$ can only occur when the longest fractional job in the fractional optimum of the current sub-problem was shorter than $\varepsilon$, but then rounding up the fractional jobs according to Lenstra-Shmoys-Tardos method yields a feasible solution with makespan at most $\varepsilon$ longer than the fractional optimum. If the tree traversal rule was best-first, it means that at the time of processing the node, the integer solution differed by at most $\varepsilon$ from the global lower bound, and therefore it is a $(1+\varepsilon)^2$-approximate solution. From this point onward, a job $j$ will be labelled \emph{small} if $p_j < \varepsilon$ and \emph{big} otherwise.

    \item We slightly refine the definition of similarity, as the previous one had an important flaw: in a sequence of points where each consecutive pair is $\varepsilon$-similar, the first and last points might not be $\varepsilon$-similar at all, and making decisions based on the sequence might propagate the error on an exponential scale. Therefore, we consider a more restrictive notion of similarity that we describe now.
\end{itemize}

For any number $\varepsilon \le x \le 2(1+\varepsilon)^2$, let $round(x)=\sup\{\varepsilon\cdot (1+\varepsilon)^k: k\in \mathbb{N}, \varepsilon\cdot (1+\varepsilon)^k \le x\}$. For a partial schedule $S$ without small jobs, let $round(S)$ denote the partial schedule we get by replacing, for each job $j$ appearing in $S$, the processing time $p_j$ by $round(p_j)$. We want to count the number of essentially different completion time vectors that can be obtained by the $round$ operator, bearing in mind that the order of machines does not matter due to the identical framework. The main thing to note is that for any partial schedule $S$ with makespan at most $2(1+\varepsilon)^2$, each completion time in $round(S)$ can take at most constant-many number of values.

\begin{lmm}\label{varepsilon-grid}
    Let $S$ be a partial schedule with makespan at most $2(1+\varepsilon)^2$ that is composed of only big jobs. Then the number of potential values of a completion time in $round(S)$ is at most $f(\varepsilon)= 8\cdot \left(\frac{1}{\varepsilon}\right)^{\log_{1+\varepsilon} \frac{2(1+\varepsilon)^2}{\varepsilon}}$.
\end{lmm}

\begin{proof}{\textbf{Proof.}}
    A completion time in $round(S)$ is given as $\varepsilon\cdot (1+\varepsilon)^{h_1} + \ldots \varepsilon\cdot (1+\varepsilon)^{h_k}$ for some $k, h_1, \ldots, h_k\in \mathbb{N}$. Consider this expression as a formal power series in $\varepsilon$ over non-negative real numbers; in other words, an element of $\mathbb{R}_{\ge 0}[\varepsilon]$. Using the binomial theorem $(a+b)^n = \sum\limits_{i=0}^n \binom{n}{i}a^i b^{n-i}$, and grouping elements in the expansion having the same exponent of $\varepsilon$, we can rewrite the expression in the form $r_0 \cdot 1 + r_1 \cdot \varepsilon + \ldots + r_l \cdot \varepsilon^l$. While there exists an $i>0$ such that $r_i \ge \frac{1}{\varepsilon}$, we can simplify the expression by setting $r_i:=  r_i - \frac{1}{\varepsilon}$ and $r_{i-1}:=  r_{i-1}+1$. Let us call the final expression, in which no such simplifying step exists, the \emph{reduced form} of the expression. In the reduced form, each $r_i$ is at most $\frac{1}{\varepsilon}-1$ for $i>0$, and $r_0 \le 8$ since the makespan of $round(S)$ is at most $2(1+\varepsilon)^2\le 8$ (here we assume $\varepsilon \le 1$, otherwise the root node contains a $(1+\varepsilon)$-approximate solution). Moreover, the largest exponent of $\varepsilon$ (denoted by $l$ in the reduced form) is bounded by $\max\{h_1, \ldots, k_k\}$, which is at most $\log_{1+\varepsilon} \frac{2(1+\varepsilon)^2}{\varepsilon}$ due to the makespan of $round(S)$ being at most $2(1+\epsilon)^2$. Hence, the number of reduced forms, and therefore the number of potential completion times of a machine in $round(S)$, is bounded by $8\cdot (1/\varepsilon)^{\log_{1+\varepsilon} \frac{2(1+\varepsilon)^2}{\varepsilon}}$. \hfill \Halmos
\end{proof}

Now we are ready to provide the new definition of profiles. According to Lemma $\ref{varepsilon-grid}$, the number of different completion times in $round(S)$ is some constant $f(\varepsilon)$; let us consider an arbitrary enumeration of them (if the constant in Lemma \ref{varepsilon-grid} is not integer, we can take its integer part as a valid upper bound as well, since the number of potential values is always integer). For $i\in [f(\varepsilon)]$, let $c_i(S)$ denote the number of machines on which $round(S)$ has a completion time equal to the $i$-th potential completion time in the fixed order. Note that $\sum\limits_{i=0}^{f(\varepsilon)} c_i(S) = m$. We define the new profile vector as $\Pi_{\text{new}}(S)=(c_1(S), \ldots, c_{f(\varepsilon)}(S))$. The value of each $c_i (S)$ can be at most $m$; hence, the new profile vector can take at most $m^{f(\varepsilon)}$-many values. We call two profiles/partial schedules $\varepsilon$-equivalent if $\Pi_{\text{new}}(S_1) = \Pi_{\text{new}}(S_2)$ coordinate-wise. We will write $\Pi_{new}(S_1) \simeq_{\varepsilon} \Pi_{new}(S_2)$ or $S_1 \simeq_{\varepsilon} S_2$ in notation. Note that as opposed to $\varepsilon$-similarity, $\varepsilon$-equivalency is now an equivalence-relation.

The idea behind $\varepsilon$-equivalent profiles is almost identical to $\varepsilon$-similar profiles: two partial schedules with equivalent profiles almost always have makespans within a factor of $(1+\varepsilon)$ to each other, except maybe when the partial schedules contain a job with processing time less than $\varepsilon$. However, as we have discussed before, we make sure that this will not compromise the framework by showing that as soon as a small job is processed, the algorithm stops. Moreover, due to the machines being identical, if the partial schedules are made up of the exact same jobs, their optimal extensions to complete schedules can also differ by at most a multiplicative factor of $(1+\varepsilon)$. Therefore, the appropriately modified version of $A^{\text{sim-prof}}_{\varepsilon}$ that discards nodes with profiles equivalent (in the new sense) to already visited nodes' profiles returns a $(1+\varepsilon)^2$-approximate solution to the problem.

\begin{figure}[ht]
    \centering
    \includegraphics[width=\linewidth]{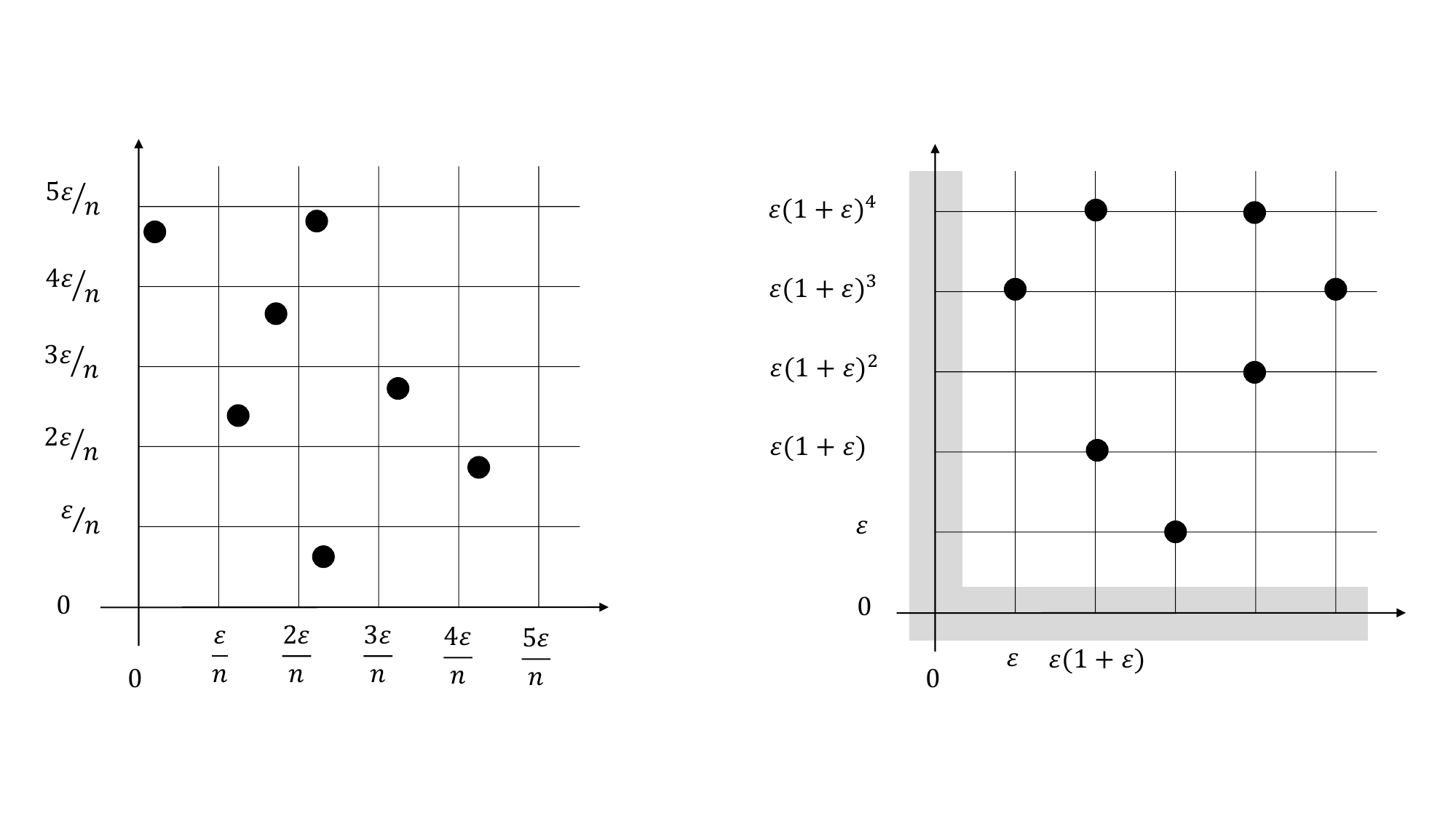}
    \caption{The notion of $\varepsilon$-similar (left), and $\varepsilon$-equivalent profiles (right). Each dot represents a potential profile vector with two coordinates. The first coordinate corresponds to the $x$-axis, the second to the $y$-axis. Profiles in the same cell (left) or the same grid point (right) are $\varepsilon$-similar/equivalent; one of them can be discarded without decreasing worst-case guarantees beyond $(1+\varepsilon)$ unless they are in the gray area. The number of cells/grid points only depends on $\varepsilon$.}
    \label{profiles}
\end{figure}

We define $A^{\text{eq-prof}}_{\varepsilon}$ as the algorithm that is identical to $A^{\text{sim-prof}}_{\varepsilon}$ in all but one aspect: it discards nodes based on $\varepsilon$-equivalency instead of $\varepsilon$-similarity.

\ptasm

\begin{proof}{\textbf{Proof of Theorem \ref{ptasm}.}}
    Let $d$ denote the number of big jobs. Since $T_{opt}\le 2$ by \eqref{const_norm}, it follows that $d\le \frac{2m}{\varepsilon}$. Suppose that at some point in the algorithm, a small job is being used for branching. Let $LLB'$ and $LUB'$ denote the global lower and upper bounds at that specific point. Rounding up a fractional solution in which each fractional job is small cannot increase the makespan by more than $\varepsilon$, so $LUB' \le LLB' + \varepsilon \le (1+\varepsilon)\cdot LLB'$, where we used that $LLB' \ge LB(root)=1$ by \eqref{const_norm}. Hence, the algorithm stops no later than reaching depth $d$. It follows from the previous analysis that each level of the branching tree contains at most as many nodes as the maximal number of pairwise non-$\varepsilon$-equivalent profiles, which is bounded by $m^{f(\varepsilon)}$. Therefore, the size of the branching tree visited by $A^{\text{eq-prof}}_{\varepsilon}$ is at most $\frac{2m^{f(\varepsilon)+1}}{\varepsilon}$, and we just need to guarantee the quality of the solution returned by the algorithm. We will show that the final solution is $(1+\varepsilon)^2$-approximate.

    Let $F^{\text{eq-prof}}_{\varepsilon}$ denote the branching tree at termination of $A^{\text{eq-prof}}_{\varepsilon}$; let $F^{\text{eq-prof}}_{\le d}$ denote the highest $d$ levels of the final branching tree given by the variant of $A^{\text{eq-prof}}_{\varepsilon}$ that does not have the early stopping criterion $\frac{LUB}{LLB}\le 1+\varepsilon$; and let $F^{\text{full}}_{\le d}$ be the tree which we obtain by running, up to depth $d$, $A^{\text{eq-prof}}_{\varepsilon=0}$ on the same input (in other words, $F^{\text{full}}_{\le d}$ is the resulting B\&B tree of an algorithm that does not consider profiles' equivalency or early stopping criteria, and finds the best partial schedule of the $d$ big jobs). It clearly holds that $F^{\text{eq-prof}}_{\varepsilon} \subseteq F^{\text{eq-prof}}_{\le d} \subseteq F^{\text{full}}_{\le d}$. Let $F^{\text{eq-prof}}_i$ and $F^{\text{full}}_i$ denote the $i$-th levels of the corresponding trees $F^{\text{eq-prof}}_{\le d}$ and $F^{\text{full}}_{\le d}$. The backbone of the proof are the corresponding versions of Lemmas \ref{i_eps} and \ref{similar_lb} for $\varepsilon$-equivalency. With a little abuse of definitions, we write $u \simeq_{\varepsilon} v$ if $\Pi(u) \simeq_{\varepsilon} \Pi(v)$. Recall that in each node in $F^{\text{full}}_i$, the $i$-th largest job is being fixed, hence the definition of profiles is only valid for nodes until depth $d$.  

    \begin{lmm}\label{repet}
        Let $0 < i \le d$ be an arbitrary integer, and let $v\in F^{\text{full}}_i$. Then there exists $u \in F^{\text{eq-prof}}_i$ such that $u \simeq_{\varepsilon} v$.
    \end{lmm}

    \begin{proof}{\textbf{Proof.}}
        We prove by induction on $i$. The case $i=1$ trivially holds, as the root node is included in both $F^{\text{full}}$ and $F^{\text{eq-prof}}$. Let $v^{parent}$ be the parent of $v$, and assume that $v$ is the $k$-th child of $v^{parent}$ from the left (equivalently, the algorithm fixed job $i$ on machine $k$ in $v^{parent}$). By induction, there exists a node $w \in F^{\text{eq-prof}}_{i-1}$ such that $v^{parent} \simeq_{\varepsilon} w$. Let $l$ be a machine whose completion time in $round(w)$ equals to the completion time of machine $k$ in $round(v^{parent})$, as guaranteed by the definition of $\varepsilon$-equivalency. Let $u$ be the $l$-th child of $w$ in $F^{\text{full}}_i$ (the one we get by fixing job $i$ on machine $l$ at node $w$). By construction, $u \simeq_{\varepsilon} v$. If $u\in F^{\text{eq-prof}}_i$, then we are ready. If not, then there exists $u' \in F^{\text{eq-prof}}_i$ such that $u \simeq_{\varepsilon} u'$, so $u' \simeq_{\varepsilon} v$ thanks to $\varepsilon$-equivalency being an equivalence-relation. \hfill \Halmos
    \end{proof}

    \begin{lmm}\label{equal_lb}
        Let $u, v \in F^{\text{full}}_i$ such that $u \simeq_{\varepsilon} v$. Then 
        \[
        \frac{1}{1+\varepsilon} \le \frac{UB(u)}{UB(v)}, \frac{LB(u)}{LB(v)}\le 1+\varepsilon.
        \]
    \end{lmm}

    \begin{proof}{\textbf{Proof.}}
        We can assume that the machines are reordered such that the completion times in $round(u)$ and $round(v)$ are pairwise equal. Let $x^*(u)$ and $x'(u)$ be the optimal fractional schedule corresponding to the sub-problem of node $u$, and its rounded-up integer complete schedule, respectively. Similarly define $x^*(v)$ and $x'(v)$. Let us remind that for an arbitrary fractional/integer schedule $S$, $C_S(i)$ denotes the completion time of machine $i$ in $S$.

        Let $i$ be an arbitrary machine. We know that for any schedule $S$ without small jobs, it holds that $C_{round(S)}(i) \le C_S(i) \le (1+\varepsilon)\cdot C_{round(S)}(i)$; and since $C_{round(u)}(i) = C_{round(v)}(i)$, it holds that 
        \[
        \frac{1}{1+\varepsilon} \le \frac{C_u(i)}{C_v(i)} \le 1+\varepsilon.
        \]
        Suppose that an optimal fractional extension of $u$ increases the completion time of machine $i$ by some $X_i$: $C_{x^*(u)}(i) = C_u(i) + X_i$. Since $u$ and $v$ consist of the same jobs, extending $v$ with the same fractional job-machine pairings yields a fractional profile whose completion time on machine $i$ have increased by $X_i$. Let us assume that the longest machine in this extension of $v$ is the first one, and so the makespan of this fractional schedule is $C_v(1) + X_1$. Therefore, the optimal fractional extension of $v$ has a makespan of at most $C_v(1) + X_1$, so
        \[
        LB(v) \le C_v(1) + X_1 \le (1+\varepsilon)\cdot C_u (1) + X_1 \le 
        \]
        \[
        \le (1+\varepsilon) \cdot (C_u(1) + X_1) = (1+\varepsilon) \cdot C_{x^*(u)}(1) \le (1+\varepsilon)\cdot LB(u).
        \]
        The other $3$ inequalities can be derived similarly. \hfill \Halmos
    \end{proof}

    With the help of these two lemmas, we can finalize the proof of the theorem. Let $LLB(F)$ and $LUB(F)$ denote the global lower and upper bounds of an arbitrary branching tree $F$, respectively. The relationship between $F^{\text{eq-prof}}_{\varepsilon}$ and $F^{\text{eq-prof}}_{\le d}$ guarantees that $LLB(F^{\text{eq-prof}}_{\varepsilon}) \le LLB(F^{\text{eq-prof}}_{\le d})$. Lemmas \ref{repet} and \ref{equal_lb} guarantee that $LLB(F^{\text{eq-prof}}_{\le d})\le (1+\varepsilon)\cdot LLB(F^{\text{full}}_{\le d})$, since if $v^{opt}$ denotes the leaf in $F^{\text{full}}_{\le d}$ with the lowest lower bound, then the lemmas guarantee the existence of a leaf node $u$ in $F^{\text{eq-prof}}_{\le d}$ such that $LB(u)\le (1+\varepsilon) \cdot LB(v^{opt})$. Combining these with the stopping criterion of $A^{\text{eq-prof}}_{\varepsilon
    }$, we have that
    \[
    LUB(F^{\text{eq-prof}}_{\varepsilon}) \le (1+\varepsilon)\cdot LLB(F^{\text{eq-prof}}_{\varepsilon}) \le (1+\varepsilon)\cdot LLB(F^{\text{eq-prof}}_{\le d})\le 
    \]
    \[
    \le (1+\varepsilon)^2 \cdot LLB(F^{\text{full}}_{\le d}).
    \]
    As $LLB(F^{\text{full}}_{\le d})$ is a true lower bound on $T_{opt}$ (since the leaves of $F^{\text{full}}_{\le d}$ provide a complete partition of feasible solutions), the proof is concluded. \hfill \Halmos
\end{proof}

\section{Computational Experiments}\label{sec:comp_exp}
In this section, we aim to assess the performance of our proposed algorithm on some randomly generated instances. Specifically, we compare our proposed strategies, which gave us theoretical guarantees, with other commonly used. The goal is to assess whether our theoretical guarantees are also observable in practice. We also provide a detailed runtime analysis. For the instances under study, a carefully optimized B\&B implementation, such as SCIP \cite{scip}, outperforms our naive implementation. However, we chose to reimplement everything from scratch, focusing on simplicity rather than efficiency.

%Intel Xeon E5-2650 v3 @ 2.30GHz, 20 (2 x 10) cores
\textbf{Experimental Setting.} All experiments were conducted on a Linux computer equipped with Intel Xeon E5-2650 v3 CPUs, each running at 2.3 GHz, and 64 GB of RAM. Our main code was implemented in Python 3.10.14, and all optimization routines were carried out using SCIP \cite{scip}. The code is provided as supplementary material.

\subsection{Multiple knapsack problem}

To test our algorithm, we generate 30 random instances for each pair $(n, m) \in \{(5, 2), (10, 2), (10, 5), (50, 2), (50, 5), (50, 15), (100, 2), (100, 5), (100, 10), (100, 15)\}$. Capacities are uniformly sampled integers from the range $[c_{\min}, c_{\max}]$, where $c_{\min} = \min_{j}{w_j}$ and $c_{max} = \left\lceil \frac{\sum w_j}{n} \right\rceil - c_{\min}$. The lower bound ensures that each item fits inside at least one of the knapsacks, while the upper bound ensures that (on average) half of the items fit in the union of the knapsacks, as discussed in \cite{chvatal}.

As baselines for node selection, we test DFS and BFS alongside the HUB rule. For branching rules, we evaluate two approaches in addition to the previously introduced ``critical element'' (CE) strategy. In one strategy, we branch on the items among the \emph{fractional} ones with the largest profit-to-weight ratio (PPW). In the other strategy, as suggested by \cite{kolesar}, we branch on the item among the \emph{unfixed} ones with the largest profit-to-weight ratio (K).
We test these strategies for different values of $\alpha$ and collect various metrics, including the number of nodes explored, the gap to the optimum, the maximum depth reached, the number of nodes after finding the optimum, and the number of left turns.
Here, we report partial results, while a more extensive set of experiments is available in the interactive notebook. 

Figure \ref{fig:mk_nodes} shows the number of nodes explored to get an $\alpha = 0.97$ approximation. %\mon{maybe "the number of nodes explored to get an $\alpha=...$ approximation}.
Since our implementation is a proof of concept and not fully optimized, we encountered memory issues. To address this, we imposed a threshold of $10^4$ nodes explored, beyond which we return the best solution found so far.
We observe that, in terms of the number of nodes explored, the Highest Upper Bound (HUB) strategy consistently outperforms the others. 
This is particularly evident in the ``hard'' instances $(100, 10), (100, 15), (50, 15)$, where all successful methods in at least one instance involve the HUB strategy.
Our proposed strategy (yellow box) frequently achieves the best overall performance.
Interestingly, in several cases, branching using the PPW rule yields better results compared to branching based on the Critical Element (CE) criterion.

In our analysis, we also record the optimality gap of the returned solution, defined as 
%abs(best_solution - OPT_exact) / max(tol, OPT_exact, best_solution)
\[\frac{|z - z^*|}{\max(z,z^*)} \]
where $z$ is the solution as returned by our algorithms and $z^*$ is the optimal solution we computed using state-of-the-art Google OR-Tools \cite{ortools} with SCIP \cite{scip} as a linear solver. 

Figure \ref{fig:mk_gap} presents this information, clearly showing that HUB is often a winning strategy in terms of producing high-quality solutions. In this case, we do not observe any significant difference between CE and PPW.

Lastly, Figure~\ref{fig:mk_our} illustrates the behavior of our algorithm for different numbers of knapsacks as the number of items increases.

\subsection{Unrelated machine scheduling problem}
In this case, we generate 30 random instances for each pair $(n, m) \in \{(5, 2), (10, 2), (10, 5), (50, 2), (50, 5), (50, 10), (100, 2)\}$. Job lengths are uniformly sampled integers from the range $[1, 100]$. 
Note that, in this case, the analysis of the $(50, 5)$ instance could not be completed within our 48-hour time frame. Hence, we report only the average over the instances that were successfully solved.  
We attempt to understand why this occurred, given that the Multi-Knapsack framework initially seemed more tractable.  
In this case, the binary search involves repeatedly solving LPs, significantly increasing computational overhead. We have 12 different B\&B-like algorithms to evaluate, whereas in the Multi-Knapsack setting, there were only 9. Lastly, unlike Multi-Knapsack, there is no pruning by infeasibility, making it harder to discard unpromising nodes quickly.

As baselines for node selection, we test DFS and BFS along with the proposed rule LLB.
For the lower bound, we chose both the proposed BS scheme and the Linear Relaxation (LR) of Integer Linear Programming, minimizing the makespan that is commonly used in unrelated parallel machine scheduling.  
%\mon{why in capital letters?}
%
As the branching rule, we test only the one we propose: branching on the variable with the largest minimum processing time across machines (MMP).  
Both BS and LR return a solution that may contain fractional components, which we need to round to obtain an upper bound on the optimal solution.  
We compare two different rounding strategies (i) The one we prove leads to a PTAS, which assigns \emph{A}ll fractional jobs to the machine where their processing time is \emph{S}hortest (AS); (ii) An alternative approach based on Best Matching (BM) of the at most $m$ fractional jobs, where we find a matching that minimizes the total makespan. 
Even in this case, we observe a similar trend: LLB results in fewer nodes explored. However, on average, BM appears to be a better rounding strategy compared to AS.  
Regarding the optimality gap, interestingly, BFS slightly outperforms LLB in some instances (e.g., $n = 100, m = 5$). This is not surprising, as our theoretical results (Proposition \ref{bfs}) suggest that BFS also guarantees a PTAS.
Overall, we observe a significant difference in the order of magnitude of the average gap between the two problems. In the first case, a gap of 0 is rarely achieved, whereas in the Unrelated Machine Scheduling problem, the algorithm reaches the optimal solution for $n \in \{5, 10\}$ and for the instance $(100, 2)$, regardless of the choice of branching, bounding, and selection strategy.

It is worth noting that BS provides a stronger lower bound than the standard dual bound given by LR, not only in theory but also in practice. For instance, Figure~\ref{fig:ujs_nodes} shows that, regardless of the node selection strategy, the best-performing approach is always associated with a methodology that uses BS.

Lastly, Figure~\ref{fig:ujs_our} illustrates the behaviour of our algorithm for different numbers of machines as the number of jobs increases.

\begin{figure}
\FIGURE
{% Subcaptions and Graphics
\subcaptionbox{Multi knapsack, $\alpha = 0.97$.\label{fig:mk_nodes}}
{\includegraphics[width=0.5\textwidth]{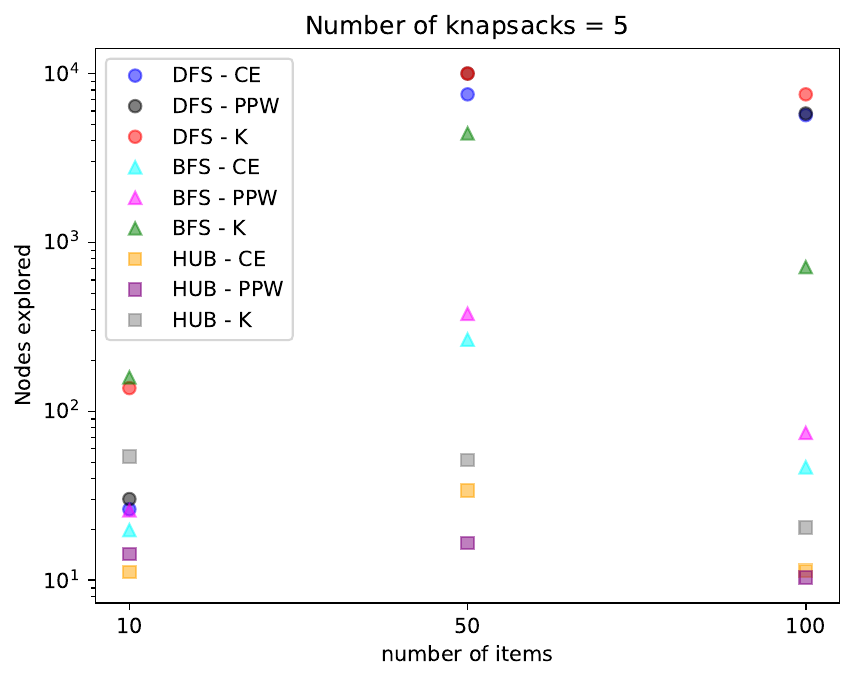}}
\hfill\subcaptionbox{Unrelated Machine Scheduling, $\varepsilon = 0.01$.\label{fig:ujs_nodes}}
{\includegraphics[width=0.5\textwidth]{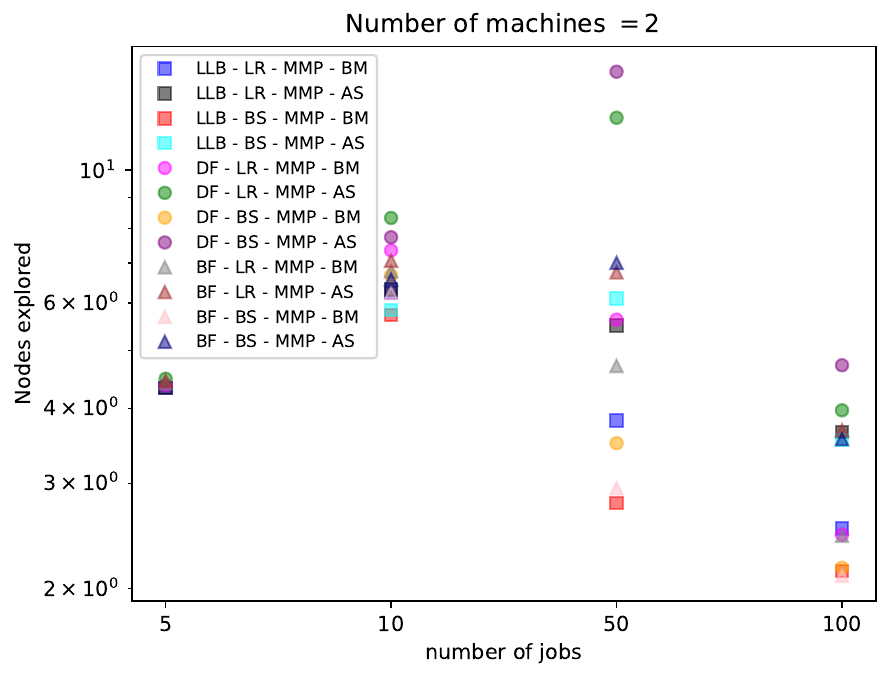}}
}
{% Main caption
Performance of different strategies in the branch-and-bound method for the Multi-Knapsack (with just $5$ knapsacks) and Unrelated Machine Scheduling problems (with just 2 machines). The geometric mean of the number of nodes explored before termination (or reaching the stopping condition) is reported on a logarithmic scale.
\label{fig:nodes}}
{% notes
}
\end{figure}

\begin{figure}
\FIGURE
{% Subcaptions and Graphics
\subcaptionbox{Multi knapsack, $\alpha = 0.97$.\label{fig:mk_gap}}
{\includegraphics[width=0.5\textwidth]{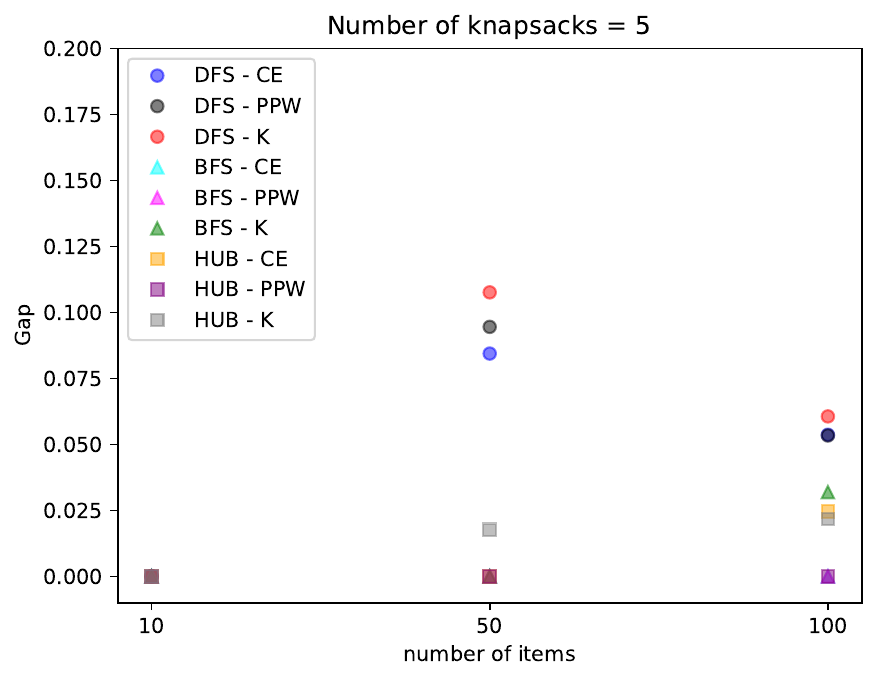}}
\hfill\subcaptionbox{Unrelated Machine Scheduling, $\varepsilon = 0.01$.\label{fig:ujs_gap}}
{\includegraphics[width=0.5\textwidth]{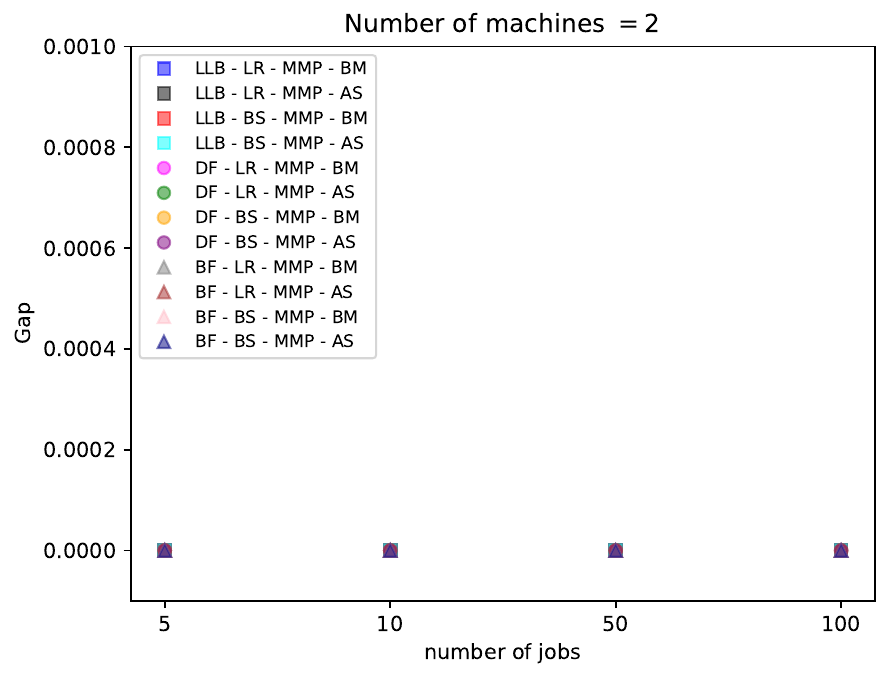}}
}
{% Main caption
Performance of different strategies in the branch-and-bound method for the Multi-Knapsack (with just $5$ knapsacks) and Unrelated Machine Scheduling problems (with just 2 machines). The geometric mean of the optimality gap reported as obtained before termination (either upon reaching the stopping condition or exiting). The $y$-axis has been limited to highlight the most relevant portion of the plot.
\label{fig:nodes_2}}
{% notes
}
\end{figure}

\begin{figure}
\FIGURE
{% Subcaptions and Graphics
\subcaptionbox{Optimality gap, $\alpha = 0.97$.\label{fig:mk_gap_oa}}
{\includegraphics[width=0.5\textwidth]{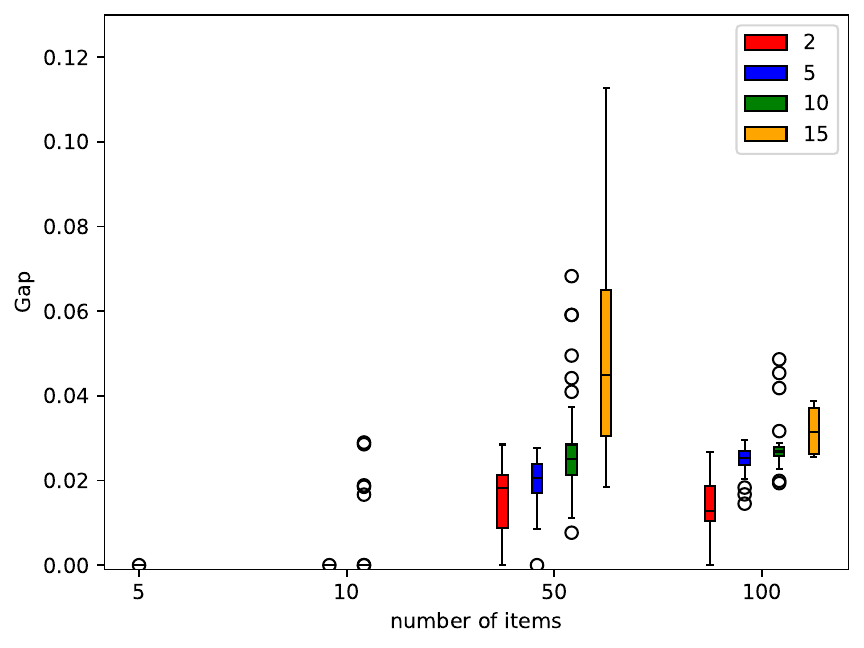}}
\hfill\subcaptionbox{Nodes explored, $\alpha = 0.97$. \label{fig:mk_nodes_oa}}
{\includegraphics[width=0.5\textwidth]{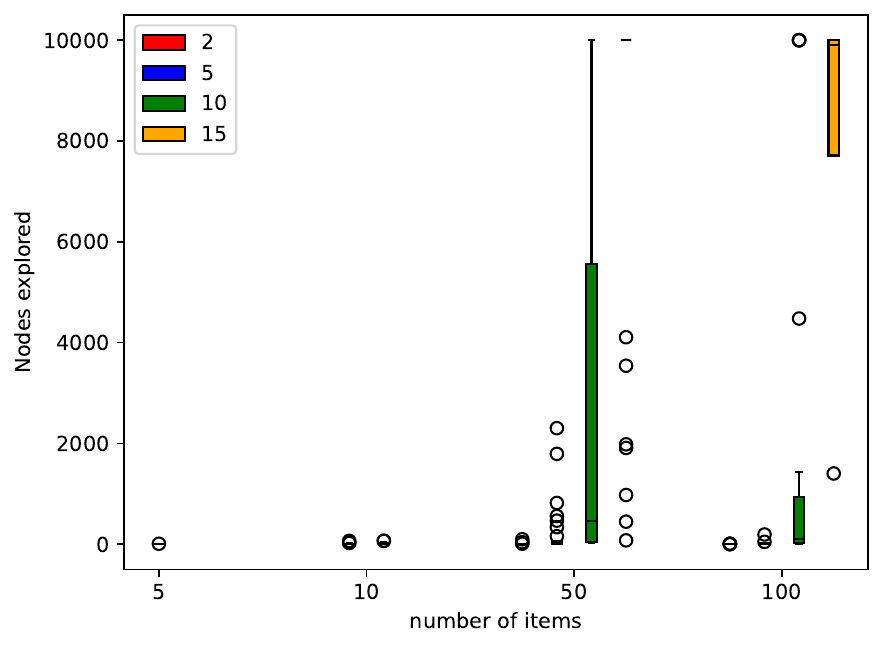}}
}
{% Main caption
Performance of our branch-and-bound method for the Multi-Knapsack problem, comparing instances with different numbers of knapsacks ($\{2,5,10,15\}$). Missing data indicates that the corresponding item–knapsack configuration is not present in our dataset. We report the distributions of two parameters: the number of nodes explored before termination (or reaching the stopping criterion) and the optimality gap at termination.
\label{fig:mk_our}}
{% notes
}
\end{figure}

\begin{figure}
\FIGURE
{% Subcaptions and Graphics
\subcaptionbox{Optimality gap, $\varepsilon = 0.01$.\label{fig:ujs_gap_oa}}
{\includegraphics[width=0.5\textwidth]{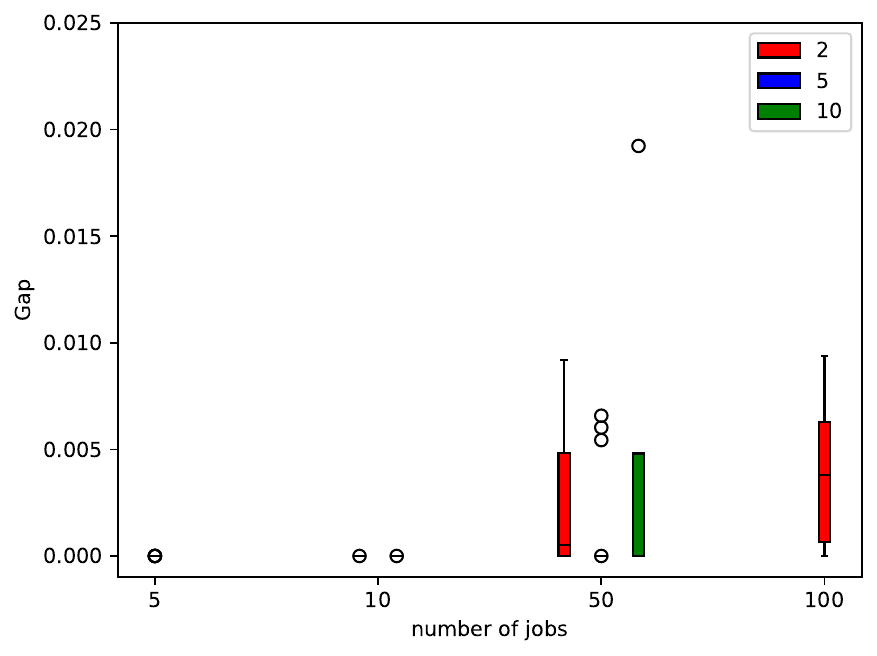}}
\hfill\subcaptionbox{Nodes Explores, $\varepsilon = 0.01$. \label{fig:ujs_nodes_oa}}
{\includegraphics[width=0.5\textwidth]{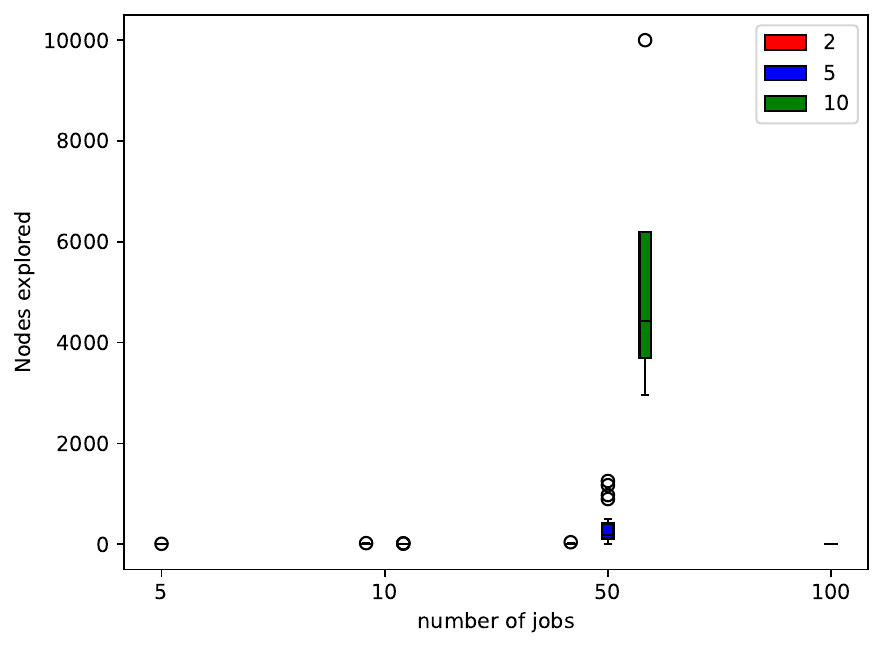}}
}
{% Main caption
Performance of our branch-and-bound method for the Unrelated Machine Scheduling problem, comparing instances with different numbers of machines ($\{2,5,10\}$). Missing data indicates that the running time for the corresponding configuration is prohibitively large. We report the distribution of two parameters: the number of nodes explored before termination (or reaching the stopping criterion) and the optimality gap at termination.
\label{fig:ujs_our}}
{% notes
}
\end{figure}

\subsection{Analysis of the runtime of the proposed 
algorithm}\label{app:comp_exp}

\begin{figure}
    \centering
    \includegraphics[width=0.49\linewidth]{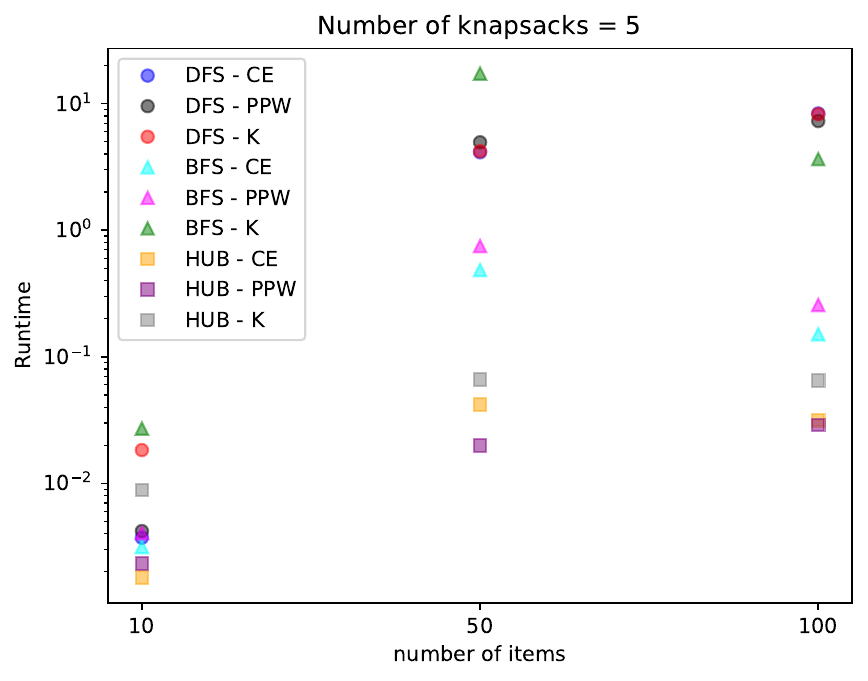}
    \includegraphics[width=0.49\linewidth]{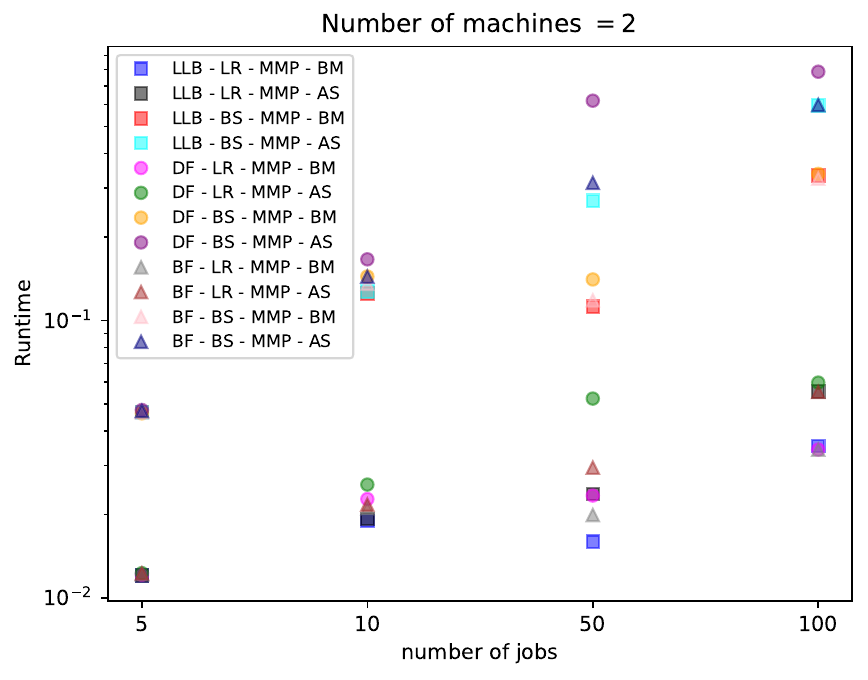}
    \caption{Performance of different strategies in the branch-and-bound method for the Multi-Knapsack (with just $5$ knapsacks) and Unrelated Machine Scheduling problems (with just 2 machines). The geometric mean of the runtime (in seconds) is reported on a logarithmic scale.}
    \label{fig:runtime}
\end{figure}

In Figure \ref{fig:runtime}, we report the runtime of our proposed algorithm for the Multi-Knapsack problem (left) and the Unrelated Machine Job Scheduling problem (right).  

\subsubsection{Multiple Knapsack}  
First, we observe that the runtime is quite long, even for relatively small instances in both cases.  
As previously mentioned, our implementation is not highly optimized.  
To put this into perspective, solving the benchmark instance $(100, 15)$ in the Multi-Knapsack framework typically requires $7.82 \pm 4.40$ seconds with the B\&B implemented in SCIP, which is significantly shorter than our results (in this case, for fairness, we disable presolve, cutting plane, heuristics, restarts, and propagation, and set a branching nodes limit equal to $10^4$).

However, an interesting pattern emerges from the analysis of our strategies.  
For the Multi-Knapsack problem, we observe that when the number of knapsacks is $\geq 10$, DFS outperforms HUB in terms of speed.  
We suspect this is because, in the Multi-Knapsack setting, the DFS strategy consistently reaches the node exploration limit, set to $10^4$.  
As a result, it likely prunes many nodes quickly due to infeasibility or bounding, avoiding complex computations on the explored nodes. Hence, we arrive at the node limit faster. 

\subsubsection{Unrelated machine scheduling problem}
As in the previous section, we compare our approach with SCIP for the case $(50, 10)$, where we observe a runtime of $10.051\pm 1.63$, which is again shorter than the observed runtime of our algorithm.  

In this case, we find that the LLB strategy results in faster solutions.  
This outcome is expected: since all nodes are explored and pruning by infeasibility is not possible (as every subproblem remains feasible), the time required per node remains roughly the same.  

Since the LLB strategy achieves better solutions with fewer node explorations, we can conclude that our theoretical expectations align with the experimental results.

\section{Concluding Remarks}\label{sec:conc}

Let us collect general observations about our methods. Most importantly, we note that the best-first rule can be replaced by BFS in $A^{\text{unrel}}_{\varepsilon}$ without losing the theoretical worst-case guarantee (while keeping the other parameters fixed). The key element of our proof was to limit the depth of the final tree of the algorithm ($F$) at $\lfloor\frac{m^2}{\varepsilon}\rfloor$. Let $F$ be the branching tree of $A^{\text{unrel}}_{\varepsilon}$, and let $F'$ be the B\&B tree of the alternative with BFS limited to depth $\lfloor\frac{m^2}{\varepsilon}\rfloor$. Since $F \subseteq F'$, the lowest lower bound in $F'$ (denoted by $LLB'$) is greater or equal than the lowest lower bound in $F$ (denoted by $LLB$), and the best integer solution found in $F'$ (denoted by $LUB'$) is better than the best integer solution we found with $A^{\text{unrel}}_{\varepsilon}$ (denoted by $LUB$). Hence,
\[
\frac{HUB'}{LLB'} \le \frac{HUB}{LLB}\le 1+\varepsilon,
\]
and it follows that with BFS as the search strategy (denoted by $A^{\text{BFS-unrel}}_{\varepsilon}$), we terminate no later than processing $F'$. Since $|F'|\le m^{\lfloor\frac{m^2}{\varepsilon}\rfloor}$, we have that

\begin{prop}\label{bfs}
    For every fixed $\varepsilon > 0$, the algorithm $A^{\text{BFS-unrel}}_{\varepsilon}$ returns a $(1+\varepsilon)$-approximate solution to the unrelated machine scheduling problem, after processing at most $m^{\lfloor\frac{m^2}{\varepsilon}\rfloor}$-many nodes in the branching tree.
\end{prop}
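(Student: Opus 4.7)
The plan is to leverage the proof of Theorem \ref{unrel_machine_ptas} essentially verbatim: the only property of the best-first selection used in that argument was the depth bound on the final branching tree, and a careful inspection shows that BFS, once truncated at the same depth, already dominates whatever best-first could have done in terms of bounds. So the strategy is to couple the two executions and compare the global lower/upper bounds at termination.

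Concretely, I would fix an instance and run both $A^{\text{unrel}}_\epsilon$ and $A^{\text{BFS-unrel}}_\epsilon$ on it. Let $F$ denote the final tree of $A^{\text{unrel}}_\epsilon$; from the proof of Theorem \ref{unrel_machine_ptas} we already know that every root-to-leaf path in $F$ has length at most $\lfloor m^2/\epsilon\rfloor$. Define $F'$ as the complete branching tree of depth $\lfloor m^2/\epsilon\rfloor$ obtained from BFS, \emph{truncated} at that level (equivalently, the BFS tree we would produce if we removed the termination test and cut off the exploration at depth $\lfloor m^2/\epsilon\rfloor$). Since both algorithms use the same bounding and branching rules, and BFS explores the tree level by level, the inclusion $F\subseteq F'$ holds.

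Now I would argue about the bounds. Let $GL,GU$ be the global lower and upper bounds produced by $A^{\text{unrel}}_\epsilon$ at termination, and $GL',GU'$ the analogous quantities when BFS has explored all of $F'$. Because $F\subseteq F'$, every feasible integer schedule discovered during the execution of $A^{\text{unrel}}_\epsilon$ is also discovered by BFS within $F'$, so $GU'\le GU$. Moreover, the global lower bound at any step is the minimum of $L(v)$ over the currently active leaves: since the active leaves of BFS at depth $\lfloor m^2/\epsilon\rfloor$ refine those of $F$, and since the local lower bound can only grow along any root-to-leaf path, $GL'\ge GL$. Combining these monotonicities,
\[
\frac{GU'}{GL'}\ \le\ \frac{GU}{GL}\ \le\ 1+\epsilon,
\]
where the last inequality is the termination criterion of $A^{\text{unrel}}_\epsilon$. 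Hence the BFS algorithm must have reached its termination condition no later than the moment it finishes exploring $F'$, yielding a $(1+\epsilon)$-approximate solution within at most $|F'|\le m^{\lfloor m^2/\epsilon\rfloor}$ processed nodes.

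The only subtle point, and the one I would double-check, is the monotonicity $GL'\ge GL$: it needs the fact that when we branch a node $v$ into children, the children's local lower bounds are all $\ge L(v)$ (which is true because each child's LP is a restriction of $v$'s LP via the fixing of one additional variable). Once this observation is in place, the rest of the argument is routine, and the node count bound follows immediately from the fact that a depth-$\lfloor m^2/\epsilon\rfloor$ tree with branching factor $m$ contains at most $m^{\lfloor m^2/\epsilon\rfloor}$ nodes.
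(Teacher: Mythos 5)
Your proposal is correct and follows essentially the same route as the paper: couple the two executions, set $F'$ to be the BFS tree truncated at depth $\lfloor m^2/\epsilon\rfloor$, observe $F\subseteq F'$, and conclude $GU'/GL' \le GU/GL \le 1+\epsilon$. The monotonicity subtlety you flag (children's lower bounds dominate the parent's, so refinement of active leaves can only raise the global lower bound) is precisely what the paper leaves implicit when it asserts that the lowest lower bound in $F'$ is at least that in $F$.
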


However, according to our experiments, best-first seems empirically better in terms both of optimality gap and number of nodes explored. For $A^{\text{knap}}_{\alpha}$, we do not have similar guarantees as there we only have bound on the number of left-turns in the branching tree and the depth of $F$ can potentially be as large as $n$. For the algorithms $A^{\text{sim-prof}}_{\varepsilon}$ and $A^{\text{eq-prof}}_{\varepsilon}$, the bound on the number of visited nodes was independent of the tree traversal strategy, since our proofs only rely on the limited number of different nodes at each level. Therefore, any alternative strategy can be used to replace the best-first one with the same worst-case guarantee.

Next, we note that for the machine scheduling problem, the results and algorithms can be modified to work with the ``standard'' LP-formulation lower bound instead of the binary search one, but the lower bound itself is trivially worse. Last, we mention that for the special case of identical parallel machines, $A_{\varepsilon}^{\text{unrel}}$ can be improved with a slight change in the rounding method. This version visits $m^{\lfloor \frac{m}{\varepsilon}\rfloor}$ nodes in the worst case, by keeping the exact same argument. Furthermore, there are fast and intuitive heuristics for finding vertices of the polyhedra, thus the time spent in individual nodes can also be reduced.

We conclude by collecting the most essential properties of the knapsack and machine scheduling problems that were exploited during the investigation, intending to set the ground for generalizing the results to a larger class of problems. 

Perhaps the most paramount property the two problems have in common is the notion of \emph{self-similarity}. To repeatedly apply the same argument for each node of a path in the branching tree, we needed the sub-problems encoded by these nodes to fall into the same category as the original problem. In other words, fixing one variable to $1$ or $0$ should result in a problem that is in the same class as the original one. This property was by default true for the knapsack problem: setting $x_{j, i} = 1$ in $MK_m(\bm{C}, \bm{w}, \bm{p})$ yields the sub-problem $MK_m(\bm{C'}, \bm{w}|_{[n]-j}, \bm{p}|_{[n]-j})$ with $\bm{C'} = (C_1, \ldots, C_{i-1}, C_i - w_j, C_{i+1}, \ldots, C_m)$, while the rightmost branch corresponds to $MK_m(\bm{C}, \bm{w}|_{[n]-j}, \bm{p}|_{[n]-j})$. For the machine scheduling problem, on the other hand, the default description was not sufficient. If we fix a binary variable to $1$ in the standard linear programming formulation, the resulting LP will not correspond to a machine scheduling problem of the same type. However, introducing overheads ensures the desired property, since now setting a variable to $1$ corresponds to increasing the appropriate machine's overhead by the processing time of the fixed job.

Strongly related to this property, we relied on the monotonicity of subproblems: for maximization problems, the local upper bound of a node is greater than the local upper bound of any of its children (for minimization problems, a similar property holds). However, this is a direct consequence of using the same objective function on subsequently smaller sets.

We also exploited that there was a quantifiable relationship between a node's lower/upper bounds and the job/item that was fixed at the node. For the knapsack problem, this relationship is guaranteed by Lemma \ref{knapsack_gap_and_critical_element}, whereas for the machine scheduling problem, the inequality
\[
\frac{p'_j}{LB(v)} \le \frac{m}{k}
\]
provided the connection.

Finally, the least demanding requirement we need to pose is that of approximability. When rounding a fractional solution of a sub-problem at a node, an $(m+1)$-approximation algorithm was used for both the knapsack problem and the machine scheduling problem. However, for the machine scheduling problem, the proof did not rely upon this local rounding guarantee, and hence the necessity of a constant-factor approximation rounding is unclear.

\bigskip

It is important to acknowledge the limitations and drawbacks of our approach. During the last couple of decades, several approximation schemes have been described for both the knapsack and the machine scheduling problem. In fact, both problems are known to admit a fully polynomial-time approximation scheme (FPTAS), which is far superior to the PTAS framework in which the desired proximity ratio ($\varepsilon$ or $\alpha$) appears in the exponent of the running time. Furthermore, as we see in Subsection \ref{sec:js_fptas_proof}, our arguments are not directly repeatable or extendable for some of the cases. Nevertheless, we believe that the connection between B\&B and approximation algorithms explored in the paper adds a surprising flavor to the theory of branch-and-bound algorithms, and sheds some light on their good behavior observed in practice.

For future research directions, we mention the possibility of a B\&B yielding an FPTAS for the unrelated machine scheduling problem (or even more complex scheduling paradigms such as the job shop problem), with a possibly different choice of parameters and additional rounding tricks.

\bibliographystyle{informs2014}

\bibliography{references}

\begin{APPENDICES}

\section{Proofs}\label{app:proofs}

\subsection{Proof of Lemma \ref{vertex_char}.}
    First, we will prove that if $\bm{x}$ is a vertex, then (i) and (ii) hold true. 
    By contradiction, assume that $G(\bm{x})$ is a proper pseudo forest; that is, there exists a connected component that contains a cycle. By relabeling jobs and machines, we can assume that the edges of the cycle are given by the non-zero fractional variables $x_{11}, x_{21}, x_{22}, x_{32}, \ldots, x_{k,k}, x_{1,k}$. We will show that $\bm{x}$ is not a vertex by proving that both $(\bm{x} + \bm{\varepsilon})$ and $(\bm{x} - \bm{\varepsilon})$ are feasible for a vector $\bm{\varepsilon}$ of length $n\cdot m$. Clearly, $\bm{\varepsilon}_{ij} = 0$ must hold every time $x_{ij} = 0$ or $x_{ij}=1$, so it is enough to consider variables corresponding to edges of $G(\bm{x})$. We claim that we can find an appropriate $\bm{\varepsilon}$ that is non-zero only on these edges. For this to hold, we clearly need that $\varepsilon_{11} = - \varepsilon_{1,k}, \ldots, \varepsilon_{k,k-1} = - \varepsilon_{k,k}$, so we can write the desired vector in the form $\varepsilon_{11} = \varepsilon_1, \varepsilon_{1,k} = -\varepsilon_1, \ldots, \varepsilon_{k,k-1} = \varepsilon_k, \varepsilon_{k,k} = - \varepsilon_k$.

    In order for $\bm{x} + \bm{\varepsilon}$ to have the same (or smaller) makespan as $\bm{x}$, we need the following conditions to be satisfied for each $i=1,\ldots,k$ (for $i=k$, $i+1$ is to be understood as $1$):
    \[
    (x_{i,i} + \varepsilon_i)\cdot p_{i,i} + (x_{i+1,i}- \varepsilon_{i+1})\cdot p_{i+1,1} \le x_{i,i}\cdot p_{i,i} + x_{i+1,i}\cdot p_{i+1,i},
    \]
    or equivalently,
    \[
    \varepsilon_i \cdot p_{i,i} - \varepsilon_{i+1}\cdot p_{i+1,1} \le 0,
    \]
    and
    \[
    \frac{\varepsilon_i}{\varepsilon_{i+1}} \le \frac{p_{i+1,i}}{p_{i,i}}. 
    \]
    Since we are in the uniform case, we know that $p_{j,k} = \frac{p_j}{s_k}$ with a profit $p_j$ and machine speed $s_k$, and so we need that 
    \begin{equation}\label{plus_eps}
        \frac{\varepsilon_i}{\varepsilon_{i+1}} \le \frac{p_{i+1}/s_i}{p_i/s_i} = \frac{p_{i+1}}{p_i}, \,\,\,\,\, i = 1, \ldots, k.
    \end{equation}
    Let us choose $\varepsilon_1 > 0$ arbitrarily, then we recursively define 
    \[
    \varepsilon_{i+1} := \varepsilon_i \cdot \frac{p_i}{p_{i+1}}.
    \]
    These trivially satisfy with equality each inequality from \eqref{plus_eps} apart from the $i=k$ case; but since the product of all the left-hand sides, as well as the product of all right-hand sides, is equal to $1$, the remaining inequality must be satisfied (with equality) as well.

    Apart from \eqref{plus_eps}, the values of the parameters $\varepsilon_i$ must adhere to the constraints
    \begin{equation}\label{plus_varepsilon_2}
    \begin{cases}
        x_{i,i} + \varepsilon_i \le 1, & i = 1, \ldots k, \\
        x_{i+1,i} -\varepsilon_i \ge 0, & i = 1, \ldots, k.
    \end{cases}
    \end{equation}
    But these are satisfied if the values for $\varepsilon_i$ are chosen to be small enough since $0 < x_{i,i}, x_{i+1, i} < 1$. Notice that multiplying each $\varepsilon_i$ with the same constant does not change the fractions in \eqref{plus_eps}; so by choosing an appropriately small constant, we can guarantee that both \eqref{plus_eps} and \eqref{plus_varepsilon_2} are satisfied.

    Repeating a similar reasoning, we gather that the same $\bm{\varepsilon}$ satisfies the corresponding versions of \eqref{plus_eps} for $\bm{x}-\bm{\varepsilon}$ as well, and if $\bm{0} \le \bm{x}-\bm{\varepsilon} \le \mathbf{1}$ is not satisfied, we can again multiply $\bm{\varepsilon}$ with a small enough constant to guarantee these inequalities while maintaining the feasibility of the makespan constraints. In conclusion, if $G(\bm{x})$ has a cycle, $\bm{x}$ cannot be a vertex.

    Now, suppose that $\bm{x}$ is a vertex and assume for contradiction that there are at least two machines in the same component of $G(\bm{x})$, $i_1$, and $i_2$, whose completion times are strictly smaller than $T'$. Consider a path connecting these two nodes, and observe that we can construct a vector $\bm{\varepsilon} \in \mathbb{R}_{\ge 0} ^{n\times m}$ which is zero apart from the coordinates of the path, and for which $(\bm{x}\pm\bm{\varepsilon})$ are feasible. To do so, notice that we can repeat the process we described for the case of having a cycle, with the sole difference that in \eqref{plus_eps} the first and last inequalities lack one of the variables, and \eqref{plus_varepsilon_2} has two less constraints. Therefore, the solution we derived for cycles is feasible for paths as well. This construction in fact can be seen as a special case of having a cycle, by splitting a ``dummy job'' between machines $i_1$ and $i_2$ and blowing up both completion times to $T'$

    Conversely, suppose that $G(\bm{x})$ does not contain a cycle, and each connected component has at most one machine with completion time strictly smaller than $T'$. Suppose for contradiction that there is a vector $\bm{\varepsilon}$ such that $(\bm{x} + \bm{\varepsilon})$ and $(\bm{x} - \bm{\varepsilon})$ are both feasible. The coordinates of $\bm{\varepsilon}$ which are different from $0$ and $1$ must correspond to edges in $G(\bm{x})$; and as it does not contain a cycle, the subgraph spanned by these edges must be a forest. Consider an arbitrary connected component of this forest having at least two nodes; this must have at least two nodes of degree $1$. If any of these nodes correspond to a job $j$, the constraints 
    \[
    \sum\limits_{i=1}^n x_{j,i} = 1
    \]
    would be violated by both $\bm{x}\pm \bm{\varepsilon}$. Therefore, we have two machines, $i_1$ and $i_2$, that have degree one and belong to the same connected component in the subgraph spanned by the fractional coordinates of $\bm{\varepsilon}$. Furthermore, since the connected components of this subgraph must be part of some connected components of $G(\bm{x})$, by assumption we have that at least one of $C_{\bm{x}}'(i_1) = T'$ or $C_{\bm{x}}'(i_2) = T'$ holds; let us assume that it is the first one. But then $C_{\bm{x}\pm\bm{\varepsilon}}'(i_1) > T'$ would hold for exactly one of $\bm{x}\pm\bm{\varepsilon}$, and so this vector would not be feasible. \hfill \Halmos

\subsection{Proof of Lemma \ref{longest_frac}.}
    If job $n$ is fractionally assigned in $\bm{x}^*$, we can choose $\hat{\bm{x}} = \bm{x}^*$. Otherwise, assume that job $n$ is integrally assigned to a machine; by relabeling machines, we can further assume that it is integrally assigned to machine $1$ and hence $x^*_{n,1} = 1$. Since $\bm{x^*}$ is fractional, there must exist a job $j$ which is assigned fractionally. We distinguish between two cases. If there is a job that has a fractional part assigned to machine $1$, then let $j$ be this job. Apart from $x^*_{j,1}$, there must be another nonzero coordinate for job $j$; we can assume that it is $x^*_{j,2}$. If there is no job having a fractional coordinate at machine $1$, let $j$ be an arbitrary fractional job and assume that $x^*_{j,2} \ne 0$.

    Let $\varepsilon_1$ and $\varepsilon_2$ be parameters whose value we fix later, and consider the following vector $\bm{\hat{x}}$:
    \[
    \hat{x}_{k,i} = \begin{cases}
        x^*_{n,1} - \varepsilon_1 = 1-\varepsilon_1, & \text{ if } (k,i) = (n,1), \\
        x^*_{n,2} + \varepsilon_1 = \varepsilon_1, & \text{ if } (k,i) = (n,2), \\
        x^*_{j,1} + \varepsilon_2, & \text{ if } (k,i) = (j,1), \\
        x^*_{j,2} - \varepsilon_2, & \text{ if } (k,i) = (j,2), \\
        x^*_{k,i}, & \text{ else}.
    \end{cases}
    \]
    Let us choose the values of $\varepsilon_1$ and $\varepsilon_2$ such that the completion times of machines $1$ and $2$ are the same in $\bm{x}^*$ and $\bm{\hat{x}}$. For this to hold, we need that
    \[
    -\varepsilon_1 \cdot \frac{p_n}{s_1} + \varepsilon_2 \cdot \frac{p_j}{s_1} = 0,
    \]
    and 
    \[
    +\varepsilon_1 \cdot \frac{p_n}{s_2} - \varepsilon_2 \cdot \frac{p_j}{s_2} = 0.
    \]
    These equalities are satisfied by any $\varepsilon_1, \varepsilon_2$ for which 
    \[
    \frac{\varepsilon_1}{\varepsilon_2} = \frac{p_j}{p_n}
    \]
    holds. Let us choose $\varepsilon_2 = \min\{1-x^*_{j,1}, x^*_{j,2}\} = x^*_{j,2} < 1$, and $\varepsilon_1 = \varepsilon_2 \cdot\frac{p_j}{p_n}$. With this choice, it also holds that $\hat{x}_{j,2} = 0$, $\hat{x}_{j,1} \le 1$; and $\varepsilon_1 = \varepsilon_2 \cdot \frac{p_j}{p_n} \le \varepsilon_2 < 1$ implies that $\hat{x}_{n,1} > 0$, $\hat{x}_{n,2} < 1$ is also true.

    In other words, the modification splits job $n$ fractionally between machines $1$ and $2$, while relocating the fractional part of job $j$ from machine $2$ to machine $1$. The resulting $\bm{\hat{x}}$ is feasible for the corresponding polyhedra; and by Lemma \ref{vertex_char}, it is enough to guarantee that $G(\bm{\hat{x}})$ is a forest for concluding that $\bm{\hat{x}}$ is a vertex, since the completion times of machines are left unchanged.

    Let $u_1, u_2, w_n, w_j$ be the nodes of $G(\bm{\hat{x}})$ corresponding to machine $1$, machine $2$, job $n$ and job $j$, respectively. We have that $w_n u_1, w_n u_2, w_j u_1 \in E(G(\bm{\hat{x}}))$ and $w_j u_2 \not \in E(G(\bm{\hat{x}}))$. We also know that $\deg _{G(\bm{\hat{x}})} (w_n) = 2$ and $w_n \not \in V(G(\bm{x}^*))$.

    Consider the first case, when job $j$ has a fractional part on machine $1$ in $\bm{x}^*$. Suppose for contradiction that there is a cycle $C$ in $G(\bm{\hat{x}})$. If $w_n \not \in V(C)$, then $C \subseteq G(\bm{x}^*)$ would hold. If $w_n \in V(C)$, then $P:=C-\{w_n u_1,w_nu_2\}$ would be a $u_1 -u_2$-path such that $P\subseteq G(\bm{x}^*)$ and $P \triangle \{w_j u_1, w_j u_2\}$ would contain a cycle in $G(\bm{x}^*)$, where $\triangle$ denotes the symmetric difference of two sets.

    In the second case, $\deg_{G(\bm{\hat{x}})} (u_1) \le 2$, $\deg_{G(\bm{x}^*)}(u_1) = 0$, $w_j u_1 \not \in E(G(\bm{x}^*))$. Assume for contradiction that there is a cycle $C$ in $G(\bm{\hat{x}})$. If $C$ does not contain any of the edges $w_j u_1, u_1 w_n, w_n u_2$, then $C \subseteq G(\bm{x}^*)$ would hold. Otherwise, since $\deg_{G(\bm{\hat{x}})} (u_1) \le 2$ and $\deg_{G(\bm{\hat{x}})} (w_n)= 2$, $C$ must contain all three edges. But then $P:= C - \{w_j u_1, u_1 w_n, w_n u_2\}$ is a $u_2 - w_j$ path in $G(\bm{x}^*)$, which together with $u_2 w_j$ would form a cycle in $G(\bm{x}^*)$. \hfill \Halmos

\section{Pseudocodes}\label{app:pseudocode}

\subsection{The branch-and-bound framework}
As already discussed in Section \ref{sec:intro}, any B\&B methods run some basic functions that can be highly customized. 
Algorithm \ref{alg:branch_and_bound} details a general B\&B framework for a minimization problem, but a similar argument can occur with a maximization one. 

\begin{algorithm}
\caption{Branch and Bound Algorithm. The steps denoted with $*$ must be changed when switching from minimization to maximization. }
\label{alg:branch_and_bound}
\footnotesize
\begin{algorithmic}[1]
    \State \textbf{Input:} Problem instance, a threshold that we wish to guarantee to our solution quality
    \State \textbf{Output:} High-quality solution
    \State Do any necessary preprocessing
    \State Initialize global lower bound (LLB) and global upper bound (LUB) (best feasible solution)
    \State Compute initial lower bound using a relaxation method
    \If{is integer}
        \textbf{return}
    \EndIf
    \State Initialize priority queue (heap) with root node
    \While{queue is not empty}
        \State Extract the most promising node from the queue
        \If{node is integral}
            \State Update LUB$^*$ if a better solution is found
            \State \textbf{continue}
        \EndIf
        \State Select a branching item/job
        \For{each possible branch (child node)}
            \State Apply feasibility check
            \State Compute new upper bound and lower bound
            \If{new lower bound $<$ LUB$^*$}
                \State Add new node to the queue
            \EndIf
        \EndFor

        \State Update LLB$^*$ as minimal remaining lower bound in queue
        \If{A relation between LUB, LLB and the threshold is satisfied}
            \State \textbf{return}
        \EndIf
    \EndWhile
\end{algorithmic}
\end{algorithm}

\subsection{A specific implementation}

Now we describe how Algorithm \ref{alg:branch_and_bound} is specified to obtain $A^{\text{knap}}_{\alpha}$. The other algorithms can be derived similarly.

Let Solve-LP denote the subroutine that on input $(\bm{C}, \bm{w}, \bm{p})$, returns the triple $(\bm{x}^*, \bm{x'}, j^*)$ as described in Proposition \ref{2_approx_knapsack} and Lemma \ref{knapsack_gap_and_critical_element}: 
$\bm{x^*}$ is the fractional optimum of the knapsack instance, $\bm{x}'$ is the best assignment among $\lfloor\bm{x}^*\rfloor$ and the critical elements, and $j^*$ is the most profitable critical element.

We describe the branch-and-bound algorithm $A^{\text{knap}}_{\alpha}$ in detail below. Each node $v$ will be identified by the unique sets $(I, E)$ with $I = \{(j_1,i_1), \ldots, (j_k, i_k)\}$ being the (item, knapsack) inclusions fixed so far, and $E=\{(e_1,m+1), \ldots, (e_l,m+1)\}$ being the excluded items.

\begin{algorithm}
\caption{$A^{\text{knap}}_{\alpha}$. The subroutine Solve-LP returns the triple of optimal fractional solution, its rounded-up integer solution and the most profitable critical item.}
\label{alg:knapsack}
\footnotesize
\begin{algorithmic}[1]
    \State \textbf{Input:} A knapsack instance $(\bm{C}, \bm{w}, \bm{p})$ with a fixed number of knapsacks.
    \State \textbf{Output:} An integer assignment whose profit is at least $\alpha$ times the optimum.
    \State $r := (\emptyset, \emptyset)$
    \State $(\bm{x}^* _r, \bm{x}_r ', j^* _r) := \text{Solve-LP}(\bm{C, \bm{w}, \bm{p}})$
    \State $UB(r):= \bm{p}\cdot \bm{x}^* _r$, $LB(r):= \bm{p}\cdot \bm{x}' _r$
    \State $HUB := U(r)$, $HLB := L(r)$
    \State $queue :=  \{r\}$
    \While{$\frac{HLB}{HUB} < \alpha$}
        \State $v := \arg\max \{U(node): node \in queue\}$
        \State $(I, E) \leftarrow v$, $\{(j_1, i_1, )\ldots, (j_k, i_k)\} \leftarrow I$, $\{(e_1, m+1), \ldots, (e_l, m+1)\} \leftarrow E$
        \State $queue := queue \setminus \{v\}$
        \State $S := [n]\setminus \left(\bigcup_{z=1}^k \{j_z\} \cup \bigcup_{z=1}^l \{e_z\}\right)$
        \State $\bm{w}_v := \bm{w}|_{S}$, $\bm{p}_v := \bm{p}|_{S}$
        \For{$i=1, \ldots, m$}
            \State $C^v_i := C_i - \sum\limits_{z: i_z = i} w_z$
        \EndFor
        \State $\bm{C}_v := (C^v _1, \ldots, C^v _m)$
        \State $(\bm{x^*}, \bm{x}', j^*):= \text{Solve-LP}(\bm{C}_v, \bm{w}_v, \bm{p}_v)$
        \For{$i=1, \ldots, m$}
            \State $v_i := (I\cup (j^*, i), E)$
        \EndFor
        \State $v_{m+1} := (I, E\cup (j^*, m+1))$
        \For{$i=1, \ldots, m$}
            \State $\bm{C}_{v_i} := (C^v _1, \ldots, C^v _{i-1}, C^v_i - w_{j^*}, C^v_{i+1}, \ldots, C^v_m)$
        \EndFor
        \State $\bm{C}_{v_{m+1}} := \bm{C}_v$, 
        \For{$i=1, \ldots, m+1$}
            \State $\bm{w}_{v_i} := \bm{w} |_{S \cup \{j^*\}}$, $\bm{p}_{v_i} := \bm{p} |_{S \cup \{j^*\}}$
            \State $(\bm{x}^* _{v_i}, \bm{x}' _{v_i}, j^*_{v_i}) := \text{Solve-LP}(\bm{C}_{v_i}, \bm{w}_{v_i}, \bm{p}_{v_i})$
            \State $SUB(v_i) := \bm{p}_{v_i} \cdot \bm{x}^* _{v_i}$, $SLB(v_i) := \bm{p}_{v_i} \cdot \bm{x}' _{v_i}$
            \State $UB(v_i) := SUB(v_i) + \sum\limits_{(j,t) \in I \cup \{(j^*, i)\}, t \ne m+1} p_j$
            \State $LB(v_i) := SLB(v_i) + \sum\limits_{(j,t) \in I \cup \{(j^*, i)\}, t \ne m+1} p_j$
            \If{$UB(v_i) > HLB$}
                \State $queue := queue \,\cup \{v_i\}$
            \EndIf
        \EndFor
        \State $HUB = \max\{UB(node): node \in queue\}$, $HLB = \max\{LB(node): node \in queue\}$
    \EndWhile
    \State $v := \arg\max \{UB(node): node \in queue\}$
    \State $(I,E) \leftarrow v$
    \State \textbf{return} $I \cup \{(j,i): (\bm{x}'_v)_{j,i}=1\}$.
\end{algorithmic}
\end{algorithm}

\end{APPENDICES}
%%%%%%%%%%%%%%%%%
\end{document}